\documentclass{article}
\usepackage{graphicx} 

\usepackage[english]{babel}
\usepackage[utf8]{inputenc}
\usepackage[T1]{fontenc}
\usepackage[letterpaper, margin=1in]{geometry}
\usepackage{graphicx}
\usepackage{framed}
\usepackage[framemethod=tikz]{mdframed}
\usepackage{xcolor}
\usepackage[most]{tcolorbox}
\usepackage{todonotes}
\usepackage{color}

\definecolor{darkgreen}{rgb}{0,0.5,0}
\definecolor{darkgray}{rgb}{0.2,0.2,0.2}
\usepackage{hyperref}
\hypersetup{
    unicode=false,          
    colorlinks=true,        
    linkcolor=blue,         
    citecolor=purple,       
    filecolor=magenta,      
    urlcolor=cyan           
}

\usepackage{amsthm}
\usepackage{amsmath}
\usepackage{amssymb}
\usepackage{amsfonts}
\usepackage{mathrsfs}
\usepackage{mathtools}
\usepackage{verbatim}
\usepackage{footnote}

\usepackage{algorithm}
\usepackage{algorithmicx}
\usepackage[noend]{algpseudocode}

\usepackage{lineno}
\usepackage{caption}
\usepackage{framed}
\usepackage{enumerate}

\usepackage{tikzsymbols}
\usepackage{thmtools,thm-restate}
\usepackage{nicefrac}

\usepackage{subcaption}
\usepackage{makecell}
\usepackage{pdfpages}
\usepackage{multirow}
\usepackage{tabularx}
\usepackage{float}
\usepackage{booktabs}

\usepackage[capitalize, nameinlink]{cleveref}
\usepackage[section]{placeins}
\crefname{theorem}{Theorem}{Theorems}
\Crefname{lemma}{Lemma}{Lemmas}
\Crefname{invariant}{Invariant}{Invariants}
\Crefname{claim}{Claim}{Claims}
\Crefname{observation}{Observation}{Observations}
\Crefname{@algorithm}{Algorithm}{Algorithms}
\Crefname{figure}{Figure}{Figures}
\crefname{appendix}{Appendix}{Appendices}
\Crefname{appendix}{Appendix}{Appendices}

\newtheorem{theorem}{Theorem}[section]
\newtheorem{lemma}[theorem]{Lemma}

\newtheorem*{remark*}{Remark}

\newcounter{resultctr}
\renewcommand{\theresultctr}{\arabic{resultctr}}

\newtcolorbox{resultbox}[1][]{%
  enhanced,
  colback=gray!10,
  colframe=gray!60,
  coltext=black,
  boxrule=0.6pt,
  arc=2mm,
  left=1.2mm,
  right=1.2mm,
  top=1.0mm,
  bottom=1.0mm,
  before title={\refstepcounter{resultctr}}, 
  title={Result~\theresultctr\if\relax\detokenize{#1}\relax\else:~(#1)\fi},
  fonttitle=\bfseries,
  coltitle=black,
  attach title to upper,
  after title=\par\smallskip,
}

\DeclareMathOperator{\poly}{poly}

\def\LOCAL{\ensuremath{\mathsf{LOCAL}}\xspace}
\def\CONGEST{\ensuremath{\mathsf{CONGEST}}\xspace}

\newcommand{\prob}[1]{\Pr \left[ #1 \right]}
\newcommand{\eqdef}{\stackrel{\text{\tiny\rm def}}{=}}
\newcommand{\eqd}{\stackrel{\mathclap{\tiny\mbox{def}}}{=}}

\newcommand{\tO}{\tilde{O}}
\newcommand{\rb}[1]{\left( #1 \right)}
\newcommand{\cA}{\mathcal{A}}

\newcommand{\eps}{\epsilon}

\newcommand{\Sample}{\textsc{Sample}\xspace}
\newcommand{\PeelSample}{\textsc{Peel-Sample}\xspace}
\newcommand{\Dual}{\textsc{Dual}\xspace}
\newcommand{\E}[1]{\mathbb{E}\left[#1\right]}
\newcommand{\Astream}{\mathcal{A}_{\textsc{stream}}}
\newcommand{\AMPC}{\mathcal{A}_{\textsc{MPC}}}
\newcommand{\Asub}{\mathcal{A}_{\textsc{sub}}}
\newcommand{\condition}{\textsc{edge-threshold}\xspace}

\title{Sparsification Framework for Directed Densest Subgraph}

\author{Slobodan Mitrović\thanks{The authors are supported by NSF Faculty Early Career Development Program No.~2340048. e-mail: \texttt{\{smitrovic, thjpan\}@ucdavis.edu}} \\ UC Davis
\and Theodore Pan\footnotemark[1] \\ UC Davis}

\date{}

\begin{document}

\maketitle

\begin{abstract}
We develop a new approach for computing approximate directed densest subgraphs (DDS). 
Our main result is a sparsification procedure that reduces a directed graph $G$ on $n$ vertices to a graph with $n \cdot \text{poly} \log n$ edges while preserving enough structure to recover an approximate DDS of $G$.
Instantiating this framework in several memory-constrained settings, we obtain the following improvements over the state of the art:
\begin{itemize}
    \item In semi-streaming, we obtain a single-pass algorithm that computes a $(1-\varepsilon)$-approximate DDS. 
    Previously, the only semi-streaming algorithm that computed a constant approximation of DDS was by Bahmani, Kumar, and Vassilvitskii (2012), providing a $0.5-\varepsilon$ approximation in $O(\log n)$ passes.
    Hence, our work completely closes the approximation gap between undirected and directed DS in the semi-streaming setting, matching the $(1-\varepsilon)$-approximate undirected DS algorithm by Esfandiari, Hajiaghayi, and Woodruff (2016).
    \item In the near-linear-memory MPC regime, we obtain an $O(1)$-round algorithm for $(1-\varepsilon)$-approximate DDS, improving over the $O(\sqrt{\log n})$-round $(0.5-\varepsilon)$-approximation algorithm of Mitrovi\'c and Pan (2024).
    \item In the sublinear-time setting, we obtain an algorithm using $\tilde{O}(n)$ time, space, and oracle queries to compute a $(1-\varepsilon)$-approximate DDS, improving over the $\tilde{O}(n^{1.5})$ time, space, and query algorithm of Esfandiari, Hajiaghayi, and Woodruff (2016).
\end{itemize}
Prior approaches to DDS rely primarily on either uniform edge sampling or peeling. However, these approaches face inherent limitations: obtaining a constant-factor approximation via uniform sampling requires $\Omega(n^{1.5})$ sampled edges, while peeling-based methods require $\Theta(\log n)$ iterations.
Our key insight is that although many peeling iterations may be needed to compute an approximate DDS directly, the first peeling step already removes a large fraction of edges that are irrelevant to near-optimal solutions.
Building on this observation, we develop a sparsification framework that first performs a single peeling step and then uniformly samples $\tilde{O}(n)$ edges from the remaining graph, while preserving a $(1-\varepsilon)$-approximate DDS.
We expect this framework to be useful more broadly for dense-subgraph problems and related optimization tasks.
\end{abstract}

\newpage

\tableofcontents

\newpage

\section{Introduction}
Let $G = (V, E)$ be a directed graph. 
A pair of vertex subsets $(S, T) \subseteq V \times V$ is called a \emph{directed densest subgraph} -- or DDS for short -- if $|E(S, T)| / \sqrt{|S| \cdot |T|}$ is maximized over all $(S, T)$ pairs; $E(S, T)$ denotes the subset of arcs of $E$ from $S$ to $T$.
This notion of subgraph density was introduced in 1999 by Kannan and Vinay \cite{kannan1999analyzing}, motivated by its ability to capture local sub-structures in web graphs and its spectral properties ~\cite{andersen2010local}.
The DDS problem has a wide range of applications such as fraud detection \cite{prakash2010eigenspokes, hooi2016fraudar}, community and data mining \cite{kleinberg1999authoritative, ma2021directed, chen2023densest}, and computational biology \cite{junker2008analysis, saha2010dense}.

Over the years, several influential approaches have been developed to solve (approximate) undirected and directed DS in various settings, including semi-streaming, parallel, distributed, dynamic, and differentially private.

\paragraph{LP-based approaches.}
In 2000, Charikar~\cite{charikar2000greedy} formulated LPs for the undirected and directed DS problems, introducing polynomial time solutions to the DS problem.
Faster -- near-linear time -- solutions are known when approximation is allowed.
In particular, \cite{bahmani2014efficient} showed how to approximately solve the dual LP for both undirected and directed DS using the multiplicative weights update (MWU) method. Then, they construct an approximate DS from the dual solution.
Their algorithm was the first to compute a $(1-\eps)$-approximate DS while only using near-linear time in the graph size.

\paragraph{Peeling-based approaches.}
Even though the LP-based approaches yield a near-linear time $(1-\eps)$-approximate DDS, they sometimes become inefficient when the ``working'' memory is sublinear in the graph size, such as in semi-streaming.
For such settings, state-of-the-art can be achieved by \emph{peeling-based} approaches, whose original version was introduced in 1996 by Asahiro, Iwama, Tamaki, and Tokuyama~\cite{10.1007/3-540-61422-2_127} for the densest $k$-subgraph problem and a similar method was presented by Charikar~\cite{charikar2000greedy} for the densest subgraph problem.
The method constructs a sequence of graphs $G_0 = G, G_1, \ldots, G_t = \emptyset$, where $G_{i}$ is obtained by removing -- also called peeling -- the \emph{smallest-degree} vertices from $G_{i-1}$.
It can be shown that at least one among these graphs is a $0.5$-approximate densest subgraph.
A similar approach, with additional details, is applicable to computing a $0.5$-approximate DDS as well. 

The described process might take too many peeling steps, e.g., in each peeling step only the smallest-degree vertices are removed.
This might make the algorithm inefficient for computation in settings like semi-streaming, parallel and distributed.
To alleviate that, Bahmani, Kumar, and Vassilvitskii~\cite{bahmani2012densest} proposed an elegant modification in which the peeling is performed by removing \emph{all vertices} with degree below the average degree times $(1+\eps)$.
That enables the authors to guarantee that there are only $O(\log n)$ peeling iterations, and it can also be shown that the algorithm outputs a $(0.5-\eps)$-approximate DS, both undirected and directed.
Their peeling algorithm was also adapted to the first semi-streaming algorithm for DS, taking $O(\log n)$ passes to compute a $(0.5-\eps)$-approximate DS.
More recent peeling-based approaches leading to state-of-the-art results are discussed in \cref{section:related-work}.

To the best of our knowledge, there are no constant-approximate DS peeling algorithms using $o(\log n)$ iterations.
In particular, \cref{fig:peeling-construction} demonstrates a construction where the peeling-only approach reduces the number of vertices in a graph by a constant factor each iteration until it arrives at a constant-factor approximate DS; such an approach inherently takes $\Omega(\log n)$ iterations.

\paragraph{Sampling-based approaches.}
To further the goal of approximating the DDS using sublinear working memory, \cite{esfandiari2016brief} showed that a uniform sample of $\tilde{O}(n\sqrt{n}/\eps^2)$ edges of an input graph $G$ retains a $(1-\eps)$-approximate directed DS in $G$.
It is not hard to provide a construction, e.g., see \cite{pmlr-v235-mitrovic24a} or \cref{fig:construction}, that shows that a uniform sample of $\Omega(n^{1.5 - \beta})$ edges is \textit{needed} to maintain a constant approximation for any $\beta\in (0, 0.5)$.
Therefore, the size of a uniform sample is tight to within an $n^{o(1)}$ factor.

This presents a substantial gap with computing undirected DS, for which it is known that a uniform sample of $\tO(n / \eps^2)$ edges, with high probability, retains a $(1-\eps)$-approximate undirected DS~\cite{esfandiari2016brief}.
Therefore, it remains an open problem whether a directed graph can be sparsified to $\tO(n)$ size, while preserving a $(1-\eps)$-approximate directed DS.

\begin{figure}[H]
    \centering
    \includegraphics[width=0.8\linewidth]{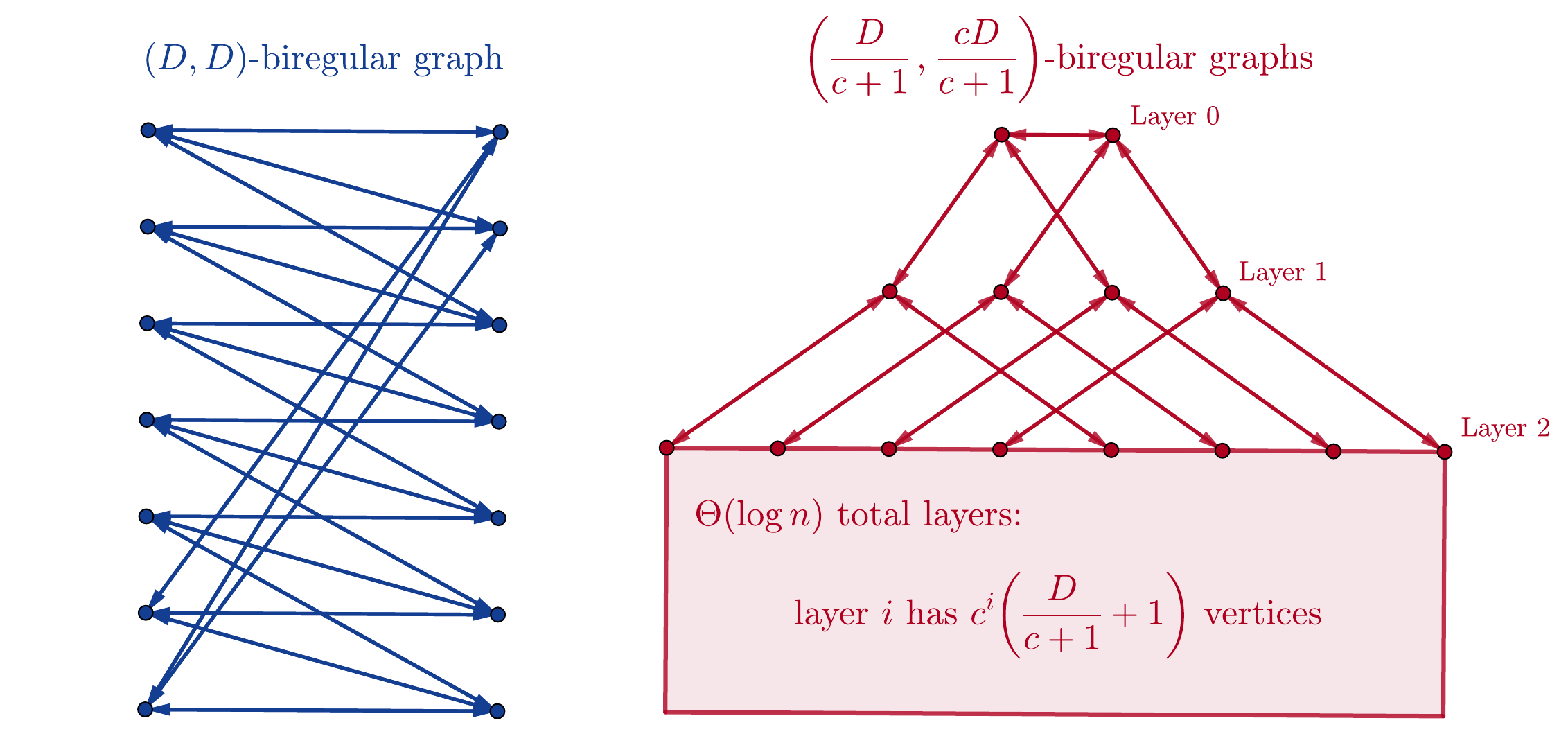}
    \caption{\small
		An example that takes $\Theta(\log n)$ iterations of peeling. Let $c$ be a constant. 
		The subgraph on the left is the DS with density $D$. 
		The subgraph on the right consists of biregular graphs between layers. Layer $0$ induces a $D/(c+1)$ regular graph, while other layers induce empty graphs. The subgraph on the right has maximum density $\frac{\sqrt{c}}{c+1}D$, which is a constant factor smaller than $D$. 
		We observe that all vertices have degree exactly $D$ (including the vertices in the DS) except the bottom layer of the subgraph on the right, where a vertex has degree $\frac{D}{c+1}$. 
		Therefore, any degree-based peeling that preserves the DS will only remove the bottom layer iteratively.
		Since there are $\Theta(\log n)$ layers, this will take $\Theta(\log n)$ iterations, each iteration reducing the graph size by approximately a factor of $c$.
    }
    \label{fig:peeling-construction}
\end{figure}

\begin{figure}[H]
    \centering
    \includegraphics[width=0.85\linewidth]{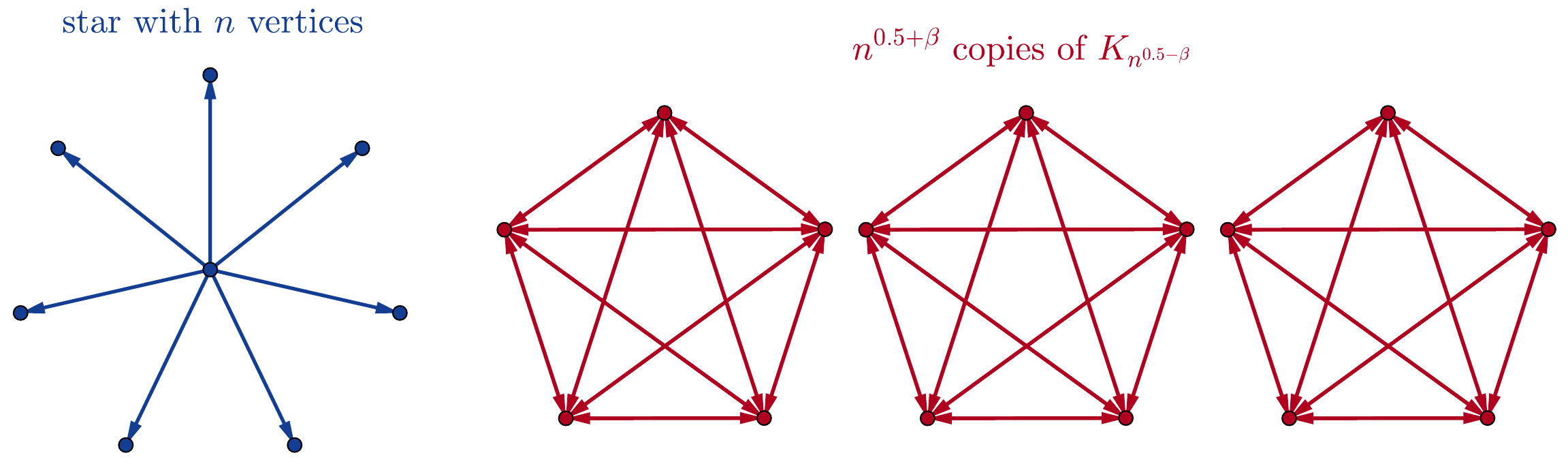}
    \caption{\small An illustration of the construction that requires $\Omega(n\sqrt{n})$ edges of sampling to maintain an approximate directed DS. The graph has $\Theta(n)$ vertices and $\Theta(n^{1.5 -\beta})$ edges for any $\beta \in (0, 0.5)$. The star is the densest subgraph and a uniform sample needs to maintain $\Theta(n)$ edges of the star to still have a constant approximation of the directed DS.
    }
    \label{fig:construction}
\end{figure}

\subsection{Our results}
Our main contribution is showing that a (refined version of) uniform sampling and a \emph{single} step of peeling results in a graph on only $\tO(n / \eps^2)$ edges that preserves a $(1-\eps)$-approximate \emph{directed} DS.
We call this black-box sparsification algorithm \PeelSample, and we attain the following result.
\begin{resultbox}[\PeelSample, Informal version of \cref{lemma:peel-sample-memory}, \cref{theorem:peel-sample-approx}]
There exists a threshold $\tau$ such that after:
\begin{itemize}
    \item Removing some vertices with degree less than $\tau$, and
    \item Sparsifying the remaining graph so that it retains $\tO(n/\eps^2)$ edges,
\end{itemize}
the resulting graph preserves a $(1-\eps)$-approximate DDS.

\end{resultbox}

\subsubsection*{Applications}
We show how to instantiate \PeelSample in semi-streaming, the massively parallel computation model, and the sublinear-time setting, yielding the following results.
\begin{resultbox}[Semi-streaming, two passes, \cref{theorem:twopass}]
There exists a two-pass semi-streaming algorithm using $\tO(n/\eps^2)$ memory that computes a $(1-\eps)$-approximate DDS with probability at least $1 - 1/\poly(n)$.
\end{resultbox}
With additional ideas and allowing for an additional $\eps$ factor in memory, our main approach can also be implemented in a single pass of semi-streaming.
\begin{resultbox}[Semi-streaming, single pass, \cref{theorem:singlepass}]
There exists a single-pass semi-streaming algorithm using $\tO(n/\eps^3)$ memory that computes a $(1-\eps)$-approximate DDS with probability at least $1 - 1/\poly(n)$.
\end{resultbox}
Previously, the only semi-streaming algorithm that computed a constant approximation of the DDS required $\Theta(\log n)$ passes and had a $(0.5-\eps)$-approximation \cite{bahmani2012densest}.
Additionally, the state-of-the-art for single-pass semi-streaming guarantees only an $O(\log n)$-approximation \cite{mitrovic2025new}.
With our result, we settle the approximation problem of computing an approximate directed DS in the single-pass semi-streaming setting, matching the state-of-the-art for undirected DS \cite{esfandiari2016brief}.

Moreover, we improve upon \cite{pmlr-v235-mitrovic24a}, which computed a $(0.5-\eps)$-approximate DDS in $O(\sqrt{\log n})$ MPC rounds in the near-linear memory regime.
Our result reduces the round complexity to a constant.
\begin{resultbox}[Massively Parallel Computation, \cref{theorem:near-linear-MPC}]
There exists a MPC algorithm in the near-linear memory regime using $O(1)$ rounds that computes a $(1-\eps)$-approximate DDS with probability at least $1 - 1/\poly(n)$. The algorithm uses $\tilde{O}(n)$ memory per machine and $\tilde{O}(m)$ total memory.
\end{resultbox}

Finally, the only previous sublinear-time algorithm for DDS comes from generalizing the sampling algorithm in \cite{esfandiari2016brief}, requiring $\tO(n\sqrt{n})$ time, $\tO(n\sqrt{n})$ memory, and $\tO(n\sqrt{n})$ queries.
Our result shaves off a factor of $\sqrt{n}$ in all of the complexities.
\begin{resultbox}[Sublinear-time, \cref{theorem:sublineartime}]
There exists a sublinear-time algorithm that computes a $(1-\eps)$-approximate DDS with probability at least $1 - 1/\poly(n)$.
The algorithm uses $\tilde{O}(n)$ time, $\tilde{O}(n)$ space, and uses $\tilde{O}(n)$ queries to an incident-list oracle model.
\end{resultbox}

\begin{table}[t]
\centering
\renewcommand{\arraystretch}{1.35} 
\setlength{\tabcolsep}{8pt}        

\begin{tabularx}{0.95\textwidth}{|l|c|c|X|}
\hline
Reference & Approximation & Memory & \makecell[l]{Complexity\\\small (e.g., rounds, passes)} \\
\hline

\multicolumn{4}{|c|}{\textbf{Semi-streaming}} \\
\hline
\cite{bahmani2012densest} & $(0.5-\eps)$ & $O(n)$ & $O(\log n)$ passes \\
\hline
\cite{esfandiari2016brief} & $(1-\eps)$ & $\tO(n\sqrt{n}/\eps^2)$ & 1 pass \\
\hline
\cite{pmlr-v235-mitrovic24a} & $(0.5-\eps)$ & $\tO(n/\eps^2)$ & $1$ pass (randomized stream) \\
\hline
\cite{mitrovic2025new} & $O(\log n)$ & $\tO(n)$ & $1$ pass \\
\hline
\textbf{Our work (\cref{theorem:twopass})} & $(1-\eps)$ & $\tO(n/\eps^2)$ & $2$ passes \\
\hline
\textbf{Our work (\cref{theorem:singlepass})} & $(1-\eps)$ & $\tO(n/\eps^3)$ & $1$ pass \\
\hline

\multicolumn{4}{|c|}{\textbf{Massively Parallel Computation}} \\
\hline
\cite{pmlr-v235-mitrovic24a} & $(0.5-\eps)$ & $\tO(n)$ & $O(\sqrt{\log n})$ rounds \\
\hline
\textbf{Our work (\cref{theorem:near-linear-MPC})} & $(1-\eps)$ & $\tO(n)$ & $O(1)$ rounds \\
\hline

\multicolumn{4}{|c|}{\textbf{Sublinear-time}} \\
\hline
\cite{esfandiari2016brief} & $(1-\eps)$ & $\tO(n\sqrt{n})$ & $\tO(n\sqrt{n})$ oracle queries and $\tO(n\sqrt{n})$ time \\
\hline
\textbf{Our work (\cref{theorem:sublineartime})} & $(1-\eps)$ & $\tO(n)$ & $\tO(n)$ oracle queries and $\tO(n)$ time \\
\hline
\end{tabularx}

\caption{Comparison of prior and our work on \textbf{directed} densest subgraph.}
\label{tab:mpc-vs-streaming}
\end{table}

\subsection{Related work}
\label{section:related-work}
The approaches of LPs and peeling, with sampling, are used in many related works and settings such as sublinear MPC, dynamic, \LOCAL, \CONGEST, and differential privacy.
These approaches are not just applied to the DS problem but also for other problems including edge orientation and $k$-core decomposition.

\paragraph{LP-based works}
The LP formulations introduced in 2000~\cite{charikar2000greedy} continue to influence state-of-the-art results.
\cite{bahmani2014efficient} used the MWU framework to approximate the dual LP of the DS problem, and then round that solution to a $(1-\eps)$-approximate DS.
Adapting the way MWU is used in this approach, \cite{ghaffari2019improved} showed how to obtain a $(1-\eps)$-approximation for undirected DS with high probability in $\tO(\sqrt{\log n})$ sublinear MPC rounds.
A key ingredient in the algorithm of \cite{ghaffari2019improved} is to sparsify the graph uniformly so as to reduce the density of the undirected DS to $O(\log n)$; as illustrated above, this is difficult to achieve for directed DS, see, for example, \cref{fig:construction}.

\cite{sawlani2020near} presented a deterministic fully-dynamic algorithm that maintains a $(1-\eps)$-approximate undirected or directed DS with $O(\poly \log n)$ worst-case update and query time.
They developed a novel edge-orientation algorithm that is used to approximately solve the dual LP for DS.

\cite{dinitz2025almost} presented a near-optimal $(\eps,\delta)$-DP algorithm that computes an approximate \textit{undirected} DS with additive error $O(\sqrt{\log(n)\log(n/\delta)}/\eps)$ whp.
They also provided a $(\eps,\delta)$-DP algorithm that computes an approximate \textit{directed} DS with additive error $O(\sqrt{\max(z, 1/z)}\cdot\log(n)\log(n/\delta)/\eps)$ whp, where $z = \sqrt{|S^*|/|T^*|}$.
Their work adapted the MWU method into the differential privacy setting.

\cite{su2020distributed} showed how to attain a $(1-\eps)$-approximate undirected or directed DS given density parameter $D$ using $O(\log n/\eps)$ rounds in the \LOCAL model.
Their algorithm is deterministic and tight in the number of rounds.
They also presented an algorithm that finds a $(1-\eps)$-approximate undirected DS whp given density parameter $D$ using $O(\log^3 n/\eps^3)$ rounds in the \CONGEST model.
Their algorithm requires adapting the MWU method to solve the DS dual LP in the \CONGEST model.

\paragraph{Peeling-based works.}
\cite{charikar2000greedy} introduced the peeling approach for DDS.
Compared to peeling for undirected DS, the algorithm for directed DS is also given ``a guess'' $z$ of $|S^*|/|T^*|$, where $(S^*, T^*)$ is a DDS.\footnote{Observe that $|S^*|, |T^*| \in \{1, \ldots, n\}$, and hence there are at most $n^2$ different guesses $z$. 
Moreover, ``guessing'' $|S^*|/|T^*|$ up to a factor of $(1-\eps)$ incurs only $1-\eps$ factor in the approximation, while reducing the number of guess to $O(\log(n) / \eps)$. It is worth noting that this parameter $z$ has been inspired by the LP formulation from \cite{charikar2000greedy}.}
The peeling-based approach for DDS initializes $S = T = V$, and then in each peeling step it removes vertices either from $S$ or from $T$.
If the current $|S| / |T|$ ratio is at least $z$, the peeling is performed on $S$ and, otherwise, it is performed on $T$.
Since the peeling approach for DS introduced by \cite{charikar2000greedy}, there have been additional works with peeling algorithms that were not mentioned before.
\cite{khuller2009finding} improved on \cite{charikar2000greedy} by removing the need for the parameter $z$ in their peeling algorithm, resulting in the first $2$-approximate DDS algorithm that ran in linear-time.
\cite{pmlr-v235-mitrovic24a} developed an algorithm that finds a $(0.5-\eps)$-approximate DDS in a single pass in a \textit{randomized} stream.

\cite{mitrovic2025new} presented a new way of peeling by leveraging parameters $D = \rho(S^*, T^*)$ and $z = \sqrt{|S^*|/|T^*|}$.
They showed that by repeatedly removing vertices in $S$ with degree less than $D/(2z)$ and vertices in $T$ with degree less than $Dz/2$, they end up with a subgraph where all the vertices have degrees above these thresholds.
As a result, this subgraph is a $2$-approximation of the DDS.
This algorithm, however, can take many iterations, so they introduced conditions that force the vertex sets to decrease by at least a factor of $(1+\eps)$ each iteration, or else they find a $(0.5-\eps)$-approximate DDS.
This results in an algorithm that takes $O(\log n)$ iterations.
Compared to \cite{bahmani2012densest}, the peeling done by \cite{mitrovic2025new} uses fixed thresholds, parametrized by $D$ and $z$, while the threshold for peeling in \cite{bahmani2012densest} is constantly changing, requiring recomputing the average degree over and over.
This property of having fixed thresholds for peeling allows the algorithm in \cite{mitrovic2025new} to be more amenable to adaptation to distributed settings and semi-streaming.
They developed an algorithm that computes a $(0.5-\eps)$-approximate DDS whp in $\tO(\sqrt{\log n})$ sublinear MPC rounds and a deterministic algorithm that computes a $O(\log n)$-approximate DDS in a single pass of semi-streaming.

\cite{bhattacharya2015space} developed a fully-dynamic algorithm that maintains a $(1/4-\eps)$-approximate \textit{undirected} DS whp while using $\tO(n)$ memory and $O(\poly \log n)$ amortized update and query time.
They used l0-samplers to maintain multiple uniform edge samples and approximately compute which iterations of peeling a vertex would be removed in.
They were unable to maintain the sublinear memory and $O(\poly \log n)$ amortized update and query times for the directed DS due to issues with sampling.

\paragraph{Related problems.}
Finally, these approaches for DS have also inspired works in other problems including edge orientation and $k$-core decomposition.
For edge orientation, the goal is to orient all the edges of an undirected graph so that the maximum out-degree of any vertex is as close to $\lceil D^* \rceil$ as possible. \cite{picard1982network} proved that the maximum out-degree of any vertex is always lower bounded by $\lceil D^* \rceil$.
\cite{barenboim2008sublogarithmic} showed that one can compute a $(2+\eps)$-orientation by iteratively peeling all vertices that have degree less than or equal to $(2+\eps)D^*$ and orienting the edges of all the vertices that are removed as out-edges.
They proved that this peeling algorithm always takes $O(\log n)$ iterations.
As a result, this peeling algorithm can be adapted to both \LOCAL and \CONGEST models using $O(\log n)$ rounds and a semi-streaming algorithm that uses $O(\log n)$ passes.
\cite{su2020distributed} improved the approximation in the \CONGEST model, presenting a deterministic algorithm that computes a $(1+\eps)$-orientation using $\tO(\log^2 n/\eps^2)$ rounds.
They accomplished this by observing that the dual LP for DS is the same as edge orientation.
Then, they adapted the MWU method for solving the LP and developed a new rounding procedure that runs efficiently and deterministically in the \CONGEST model.
\cite{ghaffari2019improved} showed how to find a $(2+\eps)$-orientation using $\tO(\sqrt{\log n})$ sublinear MPC rounds by adapting the peeling algorithm above.
Recently, \cite{ghaffari2025density} presented a new sublinear MPC algorithm that computes a $O(\log \log n)$-orientation using $O(\poly \log \log n)$ rounds.
They also adapted the same peeling algorithm but introduced new pruning ideas that keep the neighborhoods of vertices small.

For $k$-core decomposition, there is a natural peeling algorithm to find the $k$-core of a graph: iteratively remove all vertices with degree less than $k$ until the remaining vertices in the graph all have degree at least $k$.
Iteratively applying this peeling algorithm for $k = 1, 2, \ldots, 2D^*$ on the remaining graph from the previous iteration, we attain a full core-labeling of the graph \cite{matula1983smallest}.
\cite{esfandiari2018parallel} adapted this algorithm to a single-pass semi-streaming algorithm that computes a $(1-\eps)$-approximate core-labeling.
They maintained $\Theta(\log n)$ different uniform samples of the graph with different sampling probabilities and used the samples to approximate degrees for different core-labels.
\cite{ghaffari2019improved} adapted the peeling algorithm to the MPC setting, presenting a near-linear MPC algorithm that computes a $(1-\eps)$-approximation whp using $O(\log \log n)$ rounds and a sublinear MPC algorithm that computes a $(0.5-\eps)$-approximation using $\tO(\sqrt{\log n})$ rounds.

Overall, these works show that the LP, peeling, and sampling approaches for DS are applicable to many other settings and problems.

\subsection{An outline of our approach}
\cref{section:sparsification} presents our sparsification algorithm \PeelSample. \cref{section:streaming} shows how to adapt \PeelSample to the semi-streaming setting, introducing further ideas to attain a single-pass algorithm.
We present adaptations of \PeelSample to the near-linear MPC setting in \cref{section:mpc} and to the sublinear-time setting in \cref{section:sublineartime}.

\subsubsection{Refined sampling}
Our main goal is to improve the uniform sampling presented in \cite{esfandiari2016brief} that uses $\tO(n^{1.5}/\eps^2)$ memory to maintain a $(1-\eps)$-approximate DDS.
As noted in the construction above, a uniform sample of $\Omega(n^{1.5 - \beta})$ edges is needed to preserve a constant approximation for any $\beta\in (0, 0.5)$.
However, the construction is a worst-case input graph.
\emph{Is there a way to design an adaptive sampling scheme?}
We observe that the ratio of the size of the vertex sets of the DDS, $|S^*|/|T^*|$, is a useful, common parameter for DS algorithms.
This parameter is inspired by many previous works (\cite{charikar2000greedy}, \cite{bahmani2012densest}, \cite{bahmani2014efficient}, \cite{mitrovic2025new}) which all use the knowledge of this ratio for their approximation algorithms and LPs for DDS.
Letting $z = \sqrt{|S^*|/|T^*|}$, we are able to show in \cref{theorem:sample-approx} that a sample of $O(n\log n\cdot \max(z,1/z)/\eps^2)$ edges is enough to preserve a $(1-\eps)$-approximate DDS.
In doing so, our sampling improves upon the work of \cite{esfandiari2016brief} in the cases where $\max(z, 1/z) = o(\sqrt{n})$.

\emph{Remark:} An algorithm computing a densest subgraph from such a sample should, in a natural way, take into account that the sample was obtained with respect to a given $z$.
Many known algorithms, e.g., \cite{charikar2000greedy, bahmani2012densest,bahmani2014efficient,mitrovic2025new}, are already structured that way, and can be readily used for recovering an approximate DDS from our sample.
More details are discussed in \cref{sec:refined-uniform-sampling}, \cref{section:other-condition}, and \cref{section:algos-condition}.

\subsubsection{Adding low-degree peeling}
Our new refined sampling technique is still not enough to reach our goal of attaining a sparsified graph which has only $\tO(n)$ edges and preserves a constant approximation of a DDS.
We still need to remove the factor of $\max(z, 1/z)$ from the memory usage.
To achieve that, we apply a single iteration of peeling to low-degree vertices in the graph as preprocessing.
We note that peeling algorithms from previous works all use $O(\log n)$ iterations, so it is perhaps surprising that a single iteration of peeling is all we need.
The novel observation we make is that iterations of peeling are not necessarily impactful in decreasing the graph size, but \textit{the first iteration of peeling is special}.
To our knowledge, no previous work with peeling algorithms has analyzed the first iteration of peeling in the following way.

Initially, by representing the directed graph as an undirected bipartite graph (refer to \cref{sec:definitions}), the graph consists of two equal sized vertex sets, having a vertex set ratio of $1$.
The idea behind peeling algorithms for finding the DDS is to leverage the parameter $z^2 = |S^*|/|T^*|$ to keep the vertex set ratio around $z^2$.
If the current ratio is above $z^2$, more vertices in set $S$ should be removed to decrease the ratio, and if the current ratio is below $z^2$, more vertices in set $T$ should be removed to increase the ratio.
Since the initial starting ratio of the vertex sets is $1$, we observe that the first iteration of peeling results in one of the two outcomes:
\begin{itemize}
	\item Reducing the $S$-to-$T$ size ratio to close to $z^2$, forcing one of the vertex sets to drop by a large factor proportional to $\max(z^2, 1/z^2)$.
	\item The resulting subgraph after peeling is very dense and is already a good approximation of the DDS.
\end{itemize}
This is depicted in part (a) of \cref{fig:peeling}.
Therefore, if the resulting subgraph is not dense and the size of one of the vertex sets decreases by a significant amount, we show that the number of edges in the graph must also decrease by a large factor proportional to $\max(z, 1/z)$.
This is formalized in the statement of \cref{lemma:peel-memory}.
Combining this with our previous sampling scheme, the factor of $\max(z, 1/z)$ cancels out in the memory usage.
We also show that this first iteration of peeling does not affect the densest subgraph.
So, after low-degree peeling and sampling, we are left with a sparsified graph that has $\tO(n)$ edges and preserves a $(1-\eps)$-approximate directed DS.
We call this sparsification algorithm \PeelSample and its approximation guarantees are presented in \cref{theorem:peel-sample-approx}.

\begin{figure}[h]
    \centering
    \includegraphics[width=0.9\linewidth]{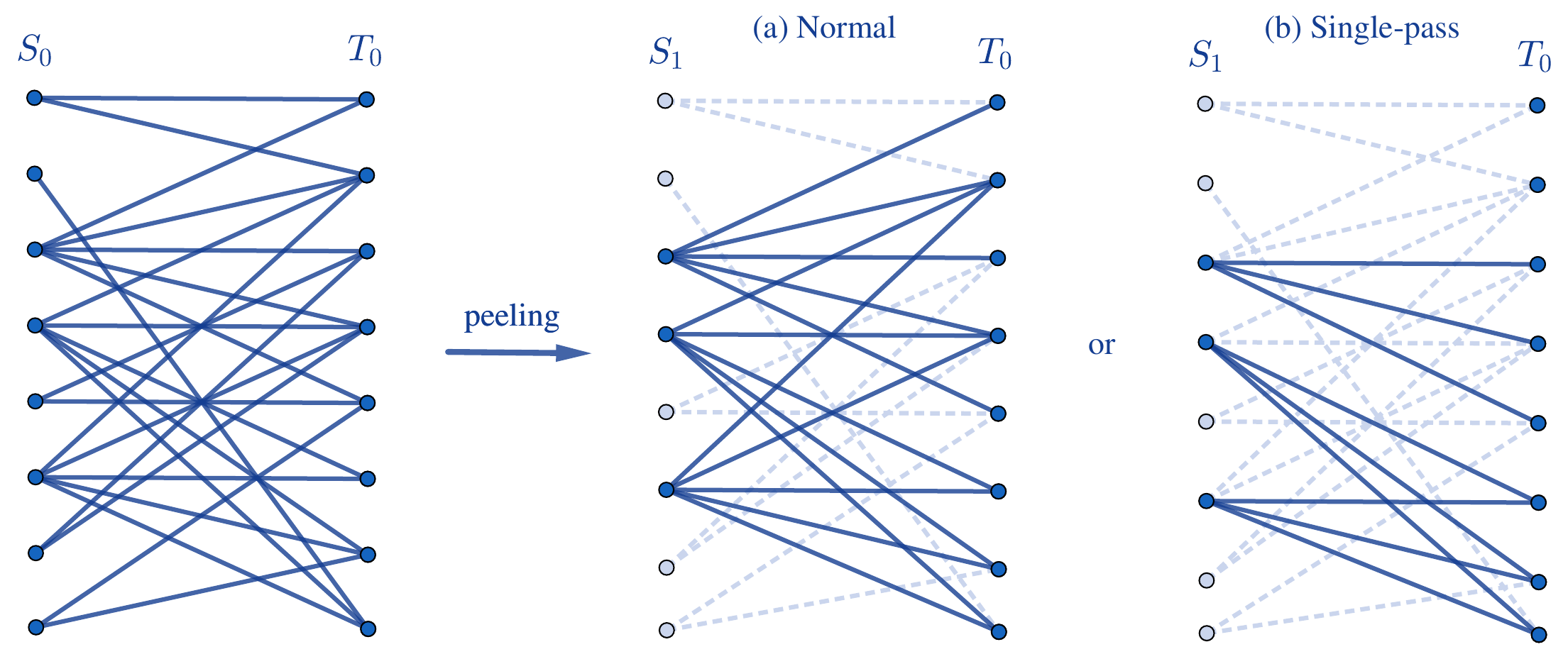}
    \caption{\small An illustration of the first iteration of peeling. Initially, $|S_0|/|T_0| = 1$. 
    (a) Assuming $z^2 < 1$, we would want to decrease the size of set $S_0$. After the first iteration of peeling, all the vertices below the threshold are removed from $S_0$ and the vertices remaining are $S_1$. Then, our claim is that either $|S_1| = O(z^2|S_0|)$ or that $(S_1, T_0)$ is a dense subgraph. (b) During single-pass semi-streaming, vertices remove $\tau - 1$ edges incident to them, where $\tau$ is the threshold. Therefore, the graph is missing a few edges for each vertex that remains in set $S_1$.
    }
    \label{fig:peeling}
\end{figure}

\subsubsection{Single-pass semi-streaming}
Though we can easily adapt our \PeelSample framework to many different settings, it is difficult to adapt it to the single-pass semi-streaming setting.
Note that it is straightforward to simulate \PeelSample in two passes though: the first pass simulates peeling and the second pass simulates sampling.
Then, the goal of a single-pass algorithm would be to combine these two passes, trying to peel vertices at the same time as it samples edges.
However, we observe that this is difficult to accomplish since we do not know which vertices should be peeled while in the middle of a stream.
There is always the possibility that the degree of a vertex that was originally small could suddenly increase past the peeling threshold near the end of the stream.
For example, consider the construction in \cref{fig:construction}.
If all the edges of the complete graphs come before the edges of the star, we will not know which vertices to peel (e.g. all the vertices of the complete graphs) till going through a majority of the stream because their degree could surpass the threshold in the future.
By that time, we would sample $\Omega(n^{1.5 - \beta})$ edges for $\beta\in (0, 0.5)$ since our sampling probability would be a constant to preserve the star graph.
This exceeds the memory bound for the semi-streaming setting.

To not oversample, we need a way to \emph{remove edges without knowing} if a vertex should be peeled during the single iteration of peeling.
To illustrate our idea, assume we are peeling vertices from vertex set $S$ below a threshold $\tau$.
Then, instead of removing/peeling vertices, for each vertex $v \in S$, we remove the first $\tau - 1$ streamed edges incident to $v$.
We observe that all edges which should be removed during the first iteration of peeling (e.g. edges incident to vertices in $S$ with degree less than $\tau$) are immediately removed when seen in the stream and no longer cause us to oversample edges.
However, we also remove $\tau - 1$ edges from each vertex in $S$ that is not removed during peeling.
This is shown in part (b) of \cref{fig:peeling} in comparison to how the first iteration of peeling would normally look in part (a).
The removal of these additional edges affects the edges in the densest subgraph, though we show that with the right parameters of $D$, $z$, and the peeling threshold $\tau$, the density of the densest subgraph does not change too much.
This is formalized in \cref{lemma:singlepass:approx}.
As a result, we compute a $(1 - \eps)$-approximate DDS in a single pass.

\subsection{Open problems}
For our single-pass semi-streaming result, our memory usage is $\tO(n/\eps^3)$ when the memory usage for undirected DS is $\tO(n/\eps^2)$.
So, there still remains a slight gap in the memory usage by a factor of $\eps$.
Generally, we also hope that our \PeelSample framework will find applications to other settings or inspire ideas to other problems for which sampling-only techniques are not sufficiently effective.

\section{Preliminaries}
\subsection{Definitions}
\label{sec:definitions}
\paragraph{Transformation: General directed to bipartite undirected graph.}
Given a directed graph $G_{\text{dir}} = (V_{\text{dir}}, E_{\text{dir}})$, we represent it as a bipartite graph $G = (V_1, V_2, E)$ where: (i) $V_1$ and $V_2$ are two copies of $V_{\text{dir}}$, and (ii) there is an edge in $E$ between $u\in V_1$ and $v\in V_2$ iff there is a directed edge from $u$ to $v$ in $E_{\text{dir}}$.
All directed graphs will be treated as bipartite with this representation. 

\paragraph{Notation.} For a bipartite graph $G = (S, T, E)$, we use $n$ to refer to $|S| + |T|$, the total number of vertices.
Given two vertex sets $A\subseteq S$ and $B\subseteq T$, we refer to the edges between them by $E_G(A, B) \eqd \{e = (i, j) \in E : i\in A, j\in B\}$.
We use $d_G(v)$ to denote the degree of vertex $v$ in $G$.

\paragraph{Directed densest subgraph.} 
Given bipartite graph $G = (V, V, E)$ and vertex sets $S,T\subseteq V$, the \textit{density} $\rho_G(S, T)$ is defined as $\rho_G(S, T) \eqd |E_G(S,T)|/\sqrt{|S| \cdot |T|}$. A \textit{directed densest subgraph} is a pair of sets $S^*$, $T^*$ such that $(S^*, T^*) \in \arg \max_{S,T\subseteq V} \rho_G(S, T)$.

\paragraph{Massively Parallel Computation (MPC).}
In MPC, synchronous rounds of computation are performed across $N$ machines.
Each machine has $S$ words of memory and, initially, the input data is arbitrarily distributed across the machines.
During a round, each machine does computation on its local data.
Then, after the round, machines exchange messages synchronously.
Each machine can send messages to any other machine, but each machine can send and receive at most $S$ words of data.
The primary objective is to perform computation in as few rounds as possible.
With respect to $S$, three regimes are primarily studied: given $\delta \in (0,1)$, \emph{sub-linear} ($S = n^\delta$), \emph{near-linear} ($S = n\poly \log n$), and \emph{super-linear} ($S = n^{1+\delta}$).
Our work focuses on the near-linear memory regime.

\paragraph{Semi-streaming.}
In the semi-streaming setting, an input graph $G = (V, E)$ is given as a stream of edges.
The algorithm receives one edge $e \in E$ at a time and updates its internal memory based on $e$.
This internal memory is constrained to be $O(n \poly \log n)$.
After all edges are presented as a stream, we say the algorithm made a \emph{pass} over the graph. An algorithm can make multiple passes over data.
During a pass, the algorithm can perform arbitrarily large polynomial-time computations.

\paragraph{Sublinear-time.}
In the sublinear-time setting, we are given access to an incident-list oracle model for a directed graph.
In this oracle model, we have two types of queries: (1) asking for the out-degree or in-degree of some vertex $v$, and (2) asking for the $i$th out-neighbor or in-neighbor of some vertex $v$.
We note that degree queries are not necessarily needed as the out-degree or in-degree of a vertex can be calculated using a binary search over the neighbors of a vertex through the neighbor queries.
Therefore, if we ignore a logarithmic factor in the query complexity, we can assume that we only use neighbor queries.
The goal is to produce an algorithm that uses a sublinear number of queries to this oracle and uses sublinear memory and time.

\subsection{Folklore lemmas}

We have the following lower bounds on the degrees of vertices in a DDS.
\begin{lemma}
\label{lemma:existence}
Let $G^* = (S^*, T^*, E_G(S^*, T^*))$ be a directed densest subgraph. Let $D = \rho_G(S^*, T^*)$ be its density and $z = \sqrt{|S^*|/|T^*|}$.
Then it holds that $d_{G^*}(v) \geq D/(2z)$ for all $v\in S^*$ and $d_{G^*}(v) \geq Dz/2$ for all $v\in T^*$. 
\end{lemma}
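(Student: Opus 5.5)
We argue by optimality of $(S^*, T^*)$: removing a single low-degree vertex from $S^*$ (or from $T^*$) cannot increase the density, and this forces a lower bound on that vertex's degree. Write $m = |E_G(S^*,T^*)|$, $s = |S^*|$, $t = |T^*|$, so that $D = m/\sqrt{st}$ and $z = \sqrt{s/t}$. Note that $D/(2z) = m/(2s)$ and $Dz/2 = m/(2t)$. The claim is therefore that every vertex of $S^*$ has degree at least half the average degree on the $S^*$ side, and symmetrically for $T^*$.

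\textbf{Case $v \in S^*$.} Fix $v\in S^*$ with $d = d_{G^*}(v)$. If $s = 1$, then $d = m = m/s \ge m/(2s)$ and we are done. Otherwise $S^*\setminus\{v\}$ is nonempty, and maximality of $(S^*,T^*)$ gives
\[
\frac{m-d}{\sqrt{(s-1)t}} \le \frac{m}{\sqrt{st}}, \qquad\text{i.e.}\qquad m - d \le m\sqrt{1-1/s}.
\]
We then use the elementary bound $\sqrt{1-x} \le 1 - x/2$ for $x\in[0,1]$. With $x = 1/s$ this gives $m-d \le m - m/(2s)$, hence $d \ge m/(2s) = D/(2z)$.

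\textbf{Case $v \in T^*$.} The argument is symmetric. Removing $v$ from $T^*$ gives $m - d \le m\sqrt{1-1/t}$, hence $d \ge m/(2t) = Dz/2$. The degenerate case $t=1$ is handled exactly as before.

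The only step needing any care is the inequality $\sqrt{1-x}\le 1-x/2$. It follows by squaring both sides, since $(1-x/2)^2 = 1 - x + x^2/4 \ge 1-x$. Beyond that, the only thing to check is that the boundary case of a singleton side is covered separately, so that the comparison set used in the optimality argument is nonempty. I expect no real obstacle.
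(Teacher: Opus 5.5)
Your proof is correct and follows the paper's argument: compare against the density after deleting $v$ from one side, rearrange to $d_{G^*}(v) \ge \bigl(1-\sqrt{1-1/|S^*|}\bigr)\,|E_G(S^*,T^*)|$, and apply $\sqrt{1-x}\le 1-x/2$. Your separate treatment of the singleton case $|S^*|=1$ (or $|T^*|=1$) is a small improvement, since the paper's comparison implicitly divides by $\sqrt{(|S^*|-1)|T^*|}$ without addressing that case.
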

\begin{proof}
Consider any vertex $v\in S^*$. 
Since $G^*$ is the directed densest subgraph, removing $v$ does not increase the density. Therefore, we have
\begin{eqnarray*}
\frac{|E_G(S^*, T^*)|}{\sqrt{|S^*||T^*|}} &\geq& \frac{|E_G(S^*, T^*)| - d_{G^*}(v)}{\sqrt{(|S^*| - 1)|T^*|}}\\ \implies d_{G^*}(v) &\geq& \rb{1 - \sqrt{\frac{|S^*|-1}{|S^*|}}}|E_G(S^*,T^*)|\\
\implies d_{G^*}(v) &\geq& \frac{|E_G(S^*,T^*)|}{2|S^*|} = \frac{D}{2z}
\end{eqnarray*}
where for the last inequality, we used $\rb{1 - 1/|S^*|}^{1/2} \leq 1 - 1/(2|S^*|)$. Similarly, we can show that for any vertex $v\in T^*$, it must satisfy $d_{G^*}(v) \geq Dz/2$.
Hence, the densest subgraph $(S^*, T^*)$ satisfies the constraints of the claim.
\end{proof}

To argue about sampling, we will use the following Chernoff bound often.
\begin{lemma}[Chernoff bound]\label{lemma:chernoff}
	Let $X_1, \ldots, X_k$ be independent random variables taking values in $[0, 1]$. Let $X \eqdef \sum_{i = 1}^k X_i$ and $\mu \eqdef \E{X}$. Then,
	\begin{enumerate}[(A)]
		\item\label{item:delta-at-most-1} For any $\delta \in [0, 1]$ it holds $\prob{|X - \mu| \ge \delta \mu} \le 2 \exp\rb{- \delta^2 \mu / 3}$.
		\item\label{item:delta-at-least-1} For any $\delta \ge 1$ it holds $\prob{X \ge (1 + \delta) \mu} \le \exp\rb{- \delta \mu / 3}$.
	\end{enumerate}
\end{lemma}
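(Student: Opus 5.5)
We prove \cref{lemma:chernoff} by the standard moment-generating-function (exponential Markov) argument, handling the upper and lower tails separately. For the upper tail, fix $t>0$ and apply Markov's inequality to $e^{tX}$:
\[
\prob{X \ge (1+\delta)\mu} \le e^{-t(1+\delta)\mu}\, \E{e^{tX}} = e^{-t(1+\delta)\mu}\prod_{i=1}^k \E{e^{tX_i}},
\]
where the product uses independence. Since $x\mapsto e^{tx}$ is convex and $X_i\in[0,1]$, we have $e^{tX_i}\le 1 + X_i(e^t-1)$. Taking expectations with $p_i \eqdef \E{X_i}$ gives $\E{e^{tX_i}} \le 1+p_i(e^t-1)\le \exp\rb{p_i(e^t-1)}$. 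Multiplying over $i$ yields $\E{e^{tX}}\le \exp\rb{\mu(e^t-1)}$. Choosing $t=\ln(1+\delta)$ then gives
\[
\prob{X \ge (1+\delta)\mu} \le \exp\rb{-\mu\,\phi(\delta)}, \qquad \phi(\delta)\eqdef (1+\delta)\ln(1+\delta)-\delta .
\]

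For the lower tail, take $t<0$ and apply Markov to $e^{tX}$ in the same way, with $t=\ln(1-\delta)$ for $\delta<1$. This gives
\[
\prob{X\le(1-\delta)\mu}\le \exp\rb{-\mu\,\psi(\delta)}, \qquad \psi(\delta)\eqdef (1-\delta)\ln(1-\delta)+\delta .
\]
The case $\delta=1$ follows by letting $\delta\to 1$, or directly from $\prob{X=0}\le\prod_i(1-p_i)\le e^{-\mu}$. A union bound over the two tails then produces the factor $2$ in part (A), once each exponent is shown to be at least $\delta^2/3$.

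What remains are elementary calculus inequalities, and this is the only place requiring care.

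\textbf{Lower-tail exponent on $[0,1]$.} We show $\psi(\delta)\ge \delta^2/2\ge\delta^2/3$. We have $\psi(0)=\psi'(0)=0$ and $\psi''(\delta)=1/(1-\delta)\ge 1$, so $\psi(\delta)\ge\delta^2/2$.

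\textbf{Upper-tail exponent on $[0,1]$.} Set $g(\delta)=\phi(\delta)-\delta^2/3$. Then $g(0)=0$ and $g'(\delta)=\ln(1+\delta)-2\delta/3$. We claim $g'\ge 0$ on $[0,1]$:
\begin{itemize}
\item $g'(0)=0$;
\item $g''(\delta)=1/(1+\delta)-2/3$ is positive up to $\delta=1/2$ and negative afterwards, so $g'$ first increases and then decreases;
\item $g'(1)=\ln 2-2/3>0$.
\end{itemize}
Hence $g'\ge 0$ on $[0,1]$, so $g\ge 0$, i.e.\ $\phi(\delta)\ge\delta^2/3$.

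\textbf{Part (B), $\delta\ge 1$.} We show $\phi(\delta)\ge\delta/3$. Let $f(\delta)=\phi(\delta)-\delta/3=(1+\delta)\ln(1+\delta)-4\delta/3$. Then $f(1)=2\ln 2-4/3>0$, and $f'(\delta)=\ln(1+\delta)-1/3\ge\ln 2-1/3>0$ for $\delta\ge1$. So $f\ge 0$ on $[1,\infty)$, which gives part (B).
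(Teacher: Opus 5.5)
Your proof is correct and complete. The paper gives no proof of this lemma; it is quoted as a standard (folklore) Chernoff bound, and your exponential-Markov argument is the textbook derivation it implicitly relies on: the convexity bound $\E{e^{tX_i}} \le \exp\rb{p_i(e^t-1)}$, the choices $t=\ln(1\pm\delta)$, the estimates $\psi(\delta)\ge\delta^2/2$ and $\phi(\delta)\ge\delta^2/3$ on $[0,1]$ and $\phi(\delta)\ge\delta/3$ on $[1,\infty)$, and the separate treatment of $\delta=1$ in the lower tail all check out (the case $\delta=0$, where $t=0$, is trivial since the claimed bound is $2$).
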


\section{Sparsification framework for directed DS}
\label{section:sparsification}
In this section, we describe our main contribution of a sparsification algorithm, \PeelSample, which allows one to extract a $(1-\eps)$-approximate DDS from the sparsified graph.
The algorithm consists of two steps: (1) peeling low-degree vertices, and then (2) sampling remaining edges.
We first present our uniform sampling scheme which uses $\tO(n\sqrt{n})$ memory in the worst-case.
Then, we demonstrate how introducing low-degree peeling reduces the memory usage to $\tO(n)$ for any directed graph.

\subsection{Refined uniform sampling}
\label{sec:refined-uniform-sampling}

We present a natural, refined, uniform sampling scheme that allows us to preserve an approximate DDS.
Let $D$ be our desired density and $z$ be the square root of our desired vertex sets ratio.
Ideally, we want $D = \rho_G(S^*, T^*)$ and $z = \sqrt{|S^*|/|T^*|}$, but approximations of these values suffice too.
Our sampling refinement uses $D$ and $z$ to (potentially) decrease the number of sampled edges when the graph does not have lopsided dense subgraphs, e.g. a star.
Specifically, it matches the upper bound of $\tilde{O}(n\sqrt{n})$ edges to a $\poly \log n$ factor only when $\max(z, 1/z) = \Omega(\sqrt{n})$, and samples fewer edges otherwise.
Our \Sample procedure is presented as \cref{alg:sampling}.

\begin{algorithm}
\caption{(\Sample) Computes a uniform sample of edges in graph $G$ using parameters $D, z$}
\label{alg:sampling}
\textbf{Input:} bipartite graph $G = (S, T, E)$, parameters $\eps$, $\alpha$, $D$ and $z$
\begin{algorithmic}[1]
\State $p \gets \min\rb{1, \frac{60\alpha\log n}{\eps^2 D}\max\rb{z, 1/z}}$
\State $E'\gets$ uniform sample of edges in $E$ each with prob. $p$
\State \Return $H = (S, T, E')$
\end{algorithmic}
\end{algorithm}

Let $H$ be the graph output by \Sample. 
The parameter $\alpha$ specifies the approximation of the DDS of $G$ we want to extract from $H$.
Note that $H$ is obtained from $G$ as a function of -- among other -- parameters $D$ and $z$.
Naturally, an algorithm $\cA$ computing an approximate DDS of $G$ while given $H$, might also need to take into account $D$ and $z$.
So, we define the following condition that the subgraph returned by $\cA$ must satisfy to successfully compute an approximate DDS of $G$ from $H$:
\begin{align}
   \condition:\quad  & \text{subgraph }(S', T')\text{ satisfies } \nonumber \\
    & |E_H(S', T')| \geq \frac{pD}{2\alpha}\rb{|S'|/z + |T'|z}.\nonumber
\end{align}
In \cref{section:other-condition} and \cref{section:algos-condition}, we show that many approximate-DDS algorithms from prior work satisfy this condition.
\emph{Remark:} For the sake of brevity, we also say that an algorithm satisfies \condition to imply that the output of the algorithm satisfies \condition.

What does satisfying \condition imply?
One can observe that if $\sqrt{|S'||T'|}$ is fixed, e.g., the denominator of the density function, then the term $|S'|/z + |T'|z$ is minimized when $|S'|/|T'| = z^2$.
Therefore, \condition gets harder to satisfy the further the ratio $|S'|/|T'|$ is from $z^2$.
In this way, \condition can be seen as a gradual condition that at the core wants $|S'|/|T'| \approx z^2$, which fits our intuition, but allows $|S'|/|T'|$ to be further from $z^2$ as long as the subgraph makes up for it by having more edges.
Finally, we have that
\[|E_H(S', T')| \geq \frac{pD}{2\alpha}\rb{|S'|/z + |T'|z} \geq \frac{pD}{\alpha}\sqrt{|S'||T'|}\]
using the AM-GM inequality.
Therefore, $(S', T')$ has density at least $pD/\alpha$ in $H$.
The goal is then to show that $(S', T')$ has density at least $(1-\eps)D/\alpha$ in $G$ with high probability.
As a result, if $\mathcal{A}$ satisfies \condition, then $\mathcal{A}$ can be applied to the sampled graph produced by $\Sample$ to compute an $(1-\eps)/\alpha$-approximate DDS with high probability.
Now, we formally prove the following guarantees about \Sample.

\begin{lemma}[Bound on the number of sampled edges]
\label{lemma:sample-memory}
Consider graph $G = (S, T, E)$, $\eps\in(0,1/2)$, $\alpha \geq 1$, $D \leq m$, and $z\in[1/\sqrt{n}, \sqrt{n}]$. Let $H = \Sample(G, \eps, \alpha, D, z)$. Then, with probability at least $1 - n^{-6}$, the number of edges in $H$ is $O(\alpha m\log n\cdot\max(z, 1/z)/(\eps^2D))$.
\end{lemma}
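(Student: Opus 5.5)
The number of edges of $H$ is $|E'| = \sum_{e\in E} X_e$, where the $X_e$ are independent Bernoulli$(p)$ indicators and $p = \min\bigl(1, \frac{60\alpha\log n}{\eps^2 D}\max(z,1/z)\bigr)$ as set in \cref{alg:sampling}. The mean is $\mu = pm$. We split into two cases according to which term attains the minimum in $p$.

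\emph{Case $p = 1$.} Then $|E'| = m$ deterministically. Since $p=1$ means $\frac{60\alpha\log n}{\eps^2 D}\max(z,1/z) \ge 1$, we get
\[
m \le \frac{60\alpha m\log n\cdot\max(z,1/z)}{\eps^2 D},
\]
which is already the claimed bound, with probability $1$.

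\emph{Case $p < 1$.} Then $\mu = \frac{60\alpha m\log n}{\eps^2 D}\max(z,1/z)$, and it suffices to show $|E'| \le 2\mu$ with probability at least $1-n^{-6}$. We apply \cref{lemma:chernoff}\ref{item:delta-at-least-1} with $\delta = 1$, which gives failure probability at most $\exp(-\mu/3)$. To make this at most $n^{-6}$ we need $\mu \ge 18\ln n$, and this is where the hypotheses are used:
\begin{itemize}
\item $D \le m$ gives $m/D \ge 1$;
\item $\alpha \ge 1$;
\item $\eps < 1/2$ gives $1/\eps^2 > 4$;
\item $\max(z,1/z) \ge 1$ holds for every $z>0$.
\end{itemize}
Together these give $\mu \ge 240\log n \ge 18\ln n$, whatever the base of the logarithm (for base $2$, $\log n \ge \ln n$). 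The assumption $z\in[1/\sqrt n,\sqrt n]$ is not needed for this bound.

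The only step requiring care is this check that $\mu$ is large enough for the Chernoff tail to reach $n^{-6}$; it rests entirely on $D\le m$. The rest is bookkeeping, and in both cases $|E(H)| = O(\alpha m\log n\cdot\max(z,1/z)/(\eps^2 D))$ with probability at least $1-n^{-6}$.
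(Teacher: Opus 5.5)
Your proof is correct and is essentially the paper's argument: a Chernoff bound on the sum of independent edge indicators, with $D\le m$ being what makes the mean $\Omega(\log n)$. The paper instead takes relative deviation $\eps$ around $\frac{60\alpha m\log n}{\eps^2 D}\max(z,1/z)$ and does not separate out the case $p=1$. That gives the sharper $(1+\eps)$-factor bound later used for the size threshold in \cref{alg:singlepass}, while your factor-$2$ version is enough for the $O(\cdot)$ statement itself.
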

\begin{proof}
The expected number of edges in $H$ is
\[\E{|E(H)|} = pm \leq \frac{60\alpha m\log n}{\eps^2 D} \max\rb{z, 1/z}.\]
Therefore, using the Chernoff Bound, we see that
\[\prob{|E(H)| - \frac{60\alpha m\log n}{\eps^2 D} \max\rb{z, 1/z} \geq \eps \frac{60\alpha m\log n}{\eps^2 D} \max\rb{z, 1/z}} \leq n^{-6}.\]
The number of edges in $H$ is $O(\alpha m\log n\cdot\max(z, 1/z)/(\eps^2D))$ with probability at least $1 - n^{-6}$.
\end{proof}

Since we know that the DDS has density lower bounded by the average degree of the graph, we can assume that $D \geq m/n$ which gives us $O(\alpha n \log n \cdot \max(z, 1/z)/\eps^2)$ edges in the sampled graph by \cref{lemma:sample-memory}.

\begin{lemma}[Density preservation for the DDS]
\label{lemma:sample-DS}
Consider graph $G = (S, T, E)$, $\eps\in(0,1/2)$, $\alpha \geq 1$, $D \geq m/n$, and $z\in[1/\sqrt{n}, \sqrt{n}]$.
Let $H = \Sample(G, \eps, \alpha, D, z)$ be the sampled graph from $G$ with probability $p = \min\rb{1, \frac{60\alpha\log n}{\eps^2 D}\max\rb{z, 1/z}}$.
Then, if $D \leq \rho_G(S^*, T^*)$, the density of $(S^*, T^*)$ in $H$ is at least $(1-\eps)pD$ with probability at least $1 - n^{-6}$.
\end{lemma}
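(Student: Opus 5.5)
The plan is to apply the Chernoff bound (\cref{lemma:chernoff}, part (A)) to the single random variable $X = |E_H(S^*, T^*)|$, where $(S^*, T^*)$ is a fixed DDS of $G$ chosen before sampling. The pair $(S^*,T^*)$ does not depend on the randomness, so no union bound over subgraphs is needed. The variable $X$ is a sum of independent indicators, one per edge of $E_G(S^*,T^*)$, each equal to $1$ with probability $p$. Hence
\[
\mu = \E{X} = p\,|E_G(S^*,T^*)| = p\,\rho_G(S^*,T^*)\sqrt{|S^*||T^*|}.
\]
If $p=1$, then $H=G$ and the claim is immediate, since $\rho_H(S^*,T^*) = \rho_G(S^*,T^*) \ge D$. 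We therefore assume $p<1$, i.e.\ $p = \frac{60\alpha\log n}{\eps^2 D}\max(z,1/z)$.

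The key step is a lower bound on $\mu$ large enough to make the Chernoff tail at most $n^{-6}$. Write $z^* = \sqrt{|S^*|/|T^*|}$, so that $\sqrt{|S^*||T^*|} = |S^*|/z^* = |T^*| z^* \ge \max(z^*,1/z^*)$. Using $D \le \rho_G(S^*,T^*)$,
\[
\mu \ge pD\sqrt{|S^*||T^*|} \ge pD = \frac{60\alpha\log n}{\eps^2}\max(z,1/z) \ge \frac{60\log n}{\eps^2},
\]
because $\alpha\ge1$ and $\max(z,1/z)\ge1$. Only the crude bound $\sqrt{|S^*||T^*|}\ge 1$ is needed here; the refinement involving $z$ matters only for other subgraphs, not for this lemma. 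The hypotheses $D\ge m/n$ and $z\in[1/\sqrt n,\sqrt n]$ serve only to keep $p$ well defined and are otherwise unused.

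Applying \cref{lemma:chernoff}\ref{item:delta-at-most-1} with $\delta=\eps$ gives
\[
\prob{X < (1-\eps)\mu} \le 2\exp(-\eps^2\mu/3) \le 2\exp(-20\log n) = 2n^{-20} \le n^{-6},
\]
where $\log$ is taken as the natural log; for any fixed base the constant $60$ still leaves ample slack. On the complementary event,
\[
\rho_H(S^*,T^*) = \frac{X}{\sqrt{|S^*||T^*|}} \ge (1-\eps)\,p\,\rho_G(S^*,T^*) \ge (1-\eps)pD,
\]
which is the claim. There is no real obstacle in this argument. The only point needing care is the lower bound on $\mu$, which must not depend on the relation between $z$ and the true $z^*$. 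It does not, since $\sqrt{|S^*||T^*|}\ge1$ already suffices.
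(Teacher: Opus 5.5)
Your proof is correct and follows the same route as the paper: a single application of the Chernoff bound (part (A) with $\delta=\eps$) to $|E_H(S^*,T^*)|$, with the exponent controlled by plugging in $p$. You are slightly more careful than the paper in two places. The paper writes the mean as $pD\sqrt{|S^*||T^*|}$ as though $D=\rho_G(S^*,T^*)$, whereas you center the bound at the true mean $p\,\rho_G(S^*,T^*)\sqrt{|S^*||T^*|}\ge pD$, and you also treat the case $p=1$ explicitly.
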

\begin{proof}
We have that
\[\E{|E_H(S^*, T^*)|} = p|E_G(S^*, T^*)| = pD\sqrt{|S^*||T^*|}.\]
Then, using the Chernoff Bound, we see
\[\prob{||E_H(S^*, T^*)| - pD\sqrt{|S^*||T^*|}| \geq \eps pD\sqrt{|S^*||T^*|}} \leq n^{-6}\]
by plugging in $p$.
Therefore, we prove that $(S^*, T^*)$ in $H$ has density at least $(1-\eps)pD$ with probability at least $1 - n^{-6}$.
\end{proof}

\begin{lemma}[\condition excludes sparse subgraphs]
\label{lemma:sample-condition}
Consider graph $G = (S, T, E)$, $\eps\in(0,1/2)$, $\alpha \geq 1$, $D \geq m/n$, and $z\in[1/\sqrt{n}, \sqrt{n}]$.
Let $H = \Sample(G, \eps, \alpha, D, z)$ be the sampled graph from $G$ with probability $p = \min\rb{1, \frac{60\alpha\log n}{\eps^2 D}\max\rb{z, 1/z}}$.
Then, for all subgraphs $(S', T')$ with $\rho_G(S', T') < (1-\eps)D/\alpha$, $(S', T')$ does not satisfy \condition with probability at least $1 - n^{-6}$.
\end{lemma}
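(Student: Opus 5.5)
We argue by a union bound over all pairs $(S',T')$, grouped by their sizes $a=|S'|$, $b=|T'|$. If $p=1$ then $H=G$ and the claim is deterministic. Indeed, $|E_G(S',T')| < (1-\eps)D\sqrt{ab}/\alpha \le \frac{D}{2\alpha}(a/z+bz)$ by AM-GM, so \condition fails. Hence assume $p=\frac{60\alpha\log n}{\eps^2 D}\max(z,1/z)<1$.

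Fix a pair $(S',T')$ with $\rho_G(S',T')<(1-\eps)D/\alpha$. Set the threshold
\[
\theta \eqdef \frac{pD}{2\alpha}\rb{a/z+bz} \;\ge\; \frac{pD}{\alpha}\sqrt{ab}.
\]
Then the expected number of sampled edges satisfies
\[
\mu \eqdef \E{|E_H(S',T')|} = p|E_G(S',T')| < (1-\eps)\frac{pD}{\alpha}\sqrt{ab} \le (1-\eps)\theta.
\]
By monotonicity (pad with dummy Bernoulli variables, or apply Chernoff with mean upper bound $\mu' = (1-\eps)\theta$), we have $\prob{|E_H(S',T')|\ge \theta}\le \prob{X\ge (1+\eps)\mu'}$. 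By \cref{lemma:chernoff}\ref{item:delta-at-most-1} this is at most $2\exp(-\eps^2(1-\eps)\theta/3)\le 2\exp(-\eps^2\theta/6)$, using $\eps<1/2$.

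The key step is to make this exponent large enough to pay for the number of pairs of a given size, which is at most $\binom{n}{a}\binom{n}{b}\le n^{a+b}=\exp((a+b)\ln n)$. This is where the $\max(z,1/z)$ factor in $p$ is used. We have
\[
\theta=\frac{30\log n}{\eps^2}\max(z,1/z)\rb{a/z+bz},
\]
and $\max(z,1/z)\cdot(a/z+bz)\ge a+b$, since $\max(z,1/z)/z\ge 1$ and $\max(z,1/z)\,z\ge1$. Hence $\eps^2\theta/6\ge 5(a+b)\log n$, and the failure probability for a fixed pair is at most $2n^{-5(a+b)}$.

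Summing over all sizes $a,b\ge1$ and all pairs of those sizes, the total failure probability is at most
\[
\sum_{a,b\ge1} 2\,n^{a+b}\,n^{-5(a+b)} = \sum_{a,b\ge 1}2n^{-4(a+b)} \le n^{-6}
\]
for $n$ large. If the constants feel tight (e.g.\ if $\log$ means $\log_2$ versus $\ln$), the constant $60$ leaves slack. The main point to check carefully is the inequality $\max(z,1/z)(a/z+bz)\ge a+b$; it is exactly why the refined sampling rate suffices to dominate the union bound. This part is routine.
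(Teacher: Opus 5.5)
Your proposal is correct and follows the paper's route: a per-pair Chernoff bound whose exponent, thanks to the $\max(z,1/z)$ factor in $p$, is at least a constant times $(|S'|+|T'|)\log n$, followed by a union bound over all pairs grouped by size. The only cosmetic differences are that you handle both cases of $z$ at once via $\max(z,1/z)(a/z+bz)\ge a+b$, whereas the paper assumes $z\ge 1$ without loss of generality, and that you check the $p=1$ case explicitly rather than calling it easy.
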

\begin{proof}
Without loss of generality, we assume that $z \geq 1$.
If $p = 1$, it is easy to see that everything holds.
Therefore, we assume that $p = \frac{60\alpha\log n}{\eps^2 D}\max\rb{z, 1/z}$.

Consider fixed $s,t$ with $1 \leq s, t \leq n$ and $|S'| = s$, $|T'| = t$.
Now, if $\rho_G(S', T') < (1-\eps)D/\alpha$, we have that
\[\E{|E_H(S', T')|} = p|E_G(S', T')| \leq \frac{(1-\eps)pD}{\alpha}\sqrt{st} \leq \frac{(1-\eps)pD}{2\alpha}\rb{s/z + tz}.\]
Then,
\begin{eqnarray*}
\prob{|E_H(S', T')| - \frac{(1-\eps)pD}{2\alpha}\rb{s/z + tz} \geq \frac{\eps pD}{2\alpha}\rb{s/z + tz}} &\leq&\exp\rb{-\eps^2 pD\rb{s/z + tz}/(6\alpha)}\\
&\leq& n^{-5(s + tz^2)}
\end{eqnarray*}
using the Chernoff bound and plugging in the value of $p$.
Observe that the LHS contains the condition that $(S', T')$ satisfies \condition.
Finally, using the union bound over all $s,t$ and vertex sets $(S', T')$, we have the upper bound
\begin{eqnarray*}
    \sum_{s=1}^n \sum_{t=1}^n \binom{n}{s}\binom{n}{t} n^{-5(s + tz^2)} \leq \sum_{s=1}^n \sum_{t=1}^n n^{-4(s + t)} \leq n^{-6}.
\end{eqnarray*}
Therefore, all such subgraphs $(S', T')$ do not satisfy \condition with probability at least $1 - n^{-6}$.
\end{proof}

Now, we can prove the following structural property of the sampled graph returned by \Sample.
\begin{theorem}
\label{theorem:sample-approx}
Consider graph $G = (S, T, E)$, $\eps\in(0,1/2)$, $\alpha \geq 1/(1-\eps)$, $D \geq m/n$, $z\in[1/\sqrt{n}, \sqrt{n}]$, and $p = \min\rb{1, \frac{60\alpha\log n}{\eps^2 D}\max\rb{z, 1/z}}$.
Let $H = \Sample(G, \eps, \alpha, D, z)$.
Then, for $D = \rho_G(S^*, T^*)$ and $z = \sqrt{|S^*|/|T^*|}$, there exists a subgraph of $H$ that satisfies \condition and has density at least $(1-\eps)D/\alpha$ in $G$ with probability at least $1 - n^{-5}$.
\end{theorem}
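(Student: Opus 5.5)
The witness subgraph will be the directed densest subgraph $(S^*, T^*)$ itself. Its density in $G$ is $D \ge (1-\eps)D/\alpha$ because $\alpha \ge 1$, so the second requirement is automatic. It therefore suffices to show that, with high probability, $(S^*, T^*)$ satisfies \condition in $H$.

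First apply \cref{lemma:sample-DS} with $D = \rho_G(S^*,T^*)$; its hypothesis $D \ge m/n$ holds because the maximum density is at least the average degree. It gives, with probability at least $1-n^{-6}$,
\[
|E_H(S^*,T^*)| \ge (1-\eps)pD\sqrt{|S^*||T^*|}.
\]
Next rewrite the right-hand side of \condition for $(S^*,T^*)$. Since $z=\sqrt{|S^*|/|T^*|}$, we have $|S^*|/z = \sqrt{|S^*||T^*|}$ and $|T^*|z = \sqrt{|S^*||T^*|}$. So the AM-GM step is tight here, and the threshold becomes
\[
\frac{pD}{2\alpha}\rb{|S^*|/z+|T^*|z} = \frac{pD}{\alpha}\sqrt{|S^*||T^*|}.
\]
The condition therefore reduces to $(1-\eps) \ge 1/\alpha$, which is exactly the hypothesis $\alpha \ge 1/(1-\eps)$. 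This explains why the theorem strengthens $\alpha\ge 1$ to that assumption.

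The argument uses only the single event of \cref{lemma:sample-DS}, so the failure probability is at most $n^{-6}\le n^{-5}$. The stated $n^{-5}$ leaves room to also union-bound with \cref{lemma:sample-condition} and \cref{lemma:sample-memory} if needed. The degenerate case $p=1$ is immediate, since then $H=G$ and the same computation applies with equality of edge counts.

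The main (mild) obstacle is the Chernoff step inside \cref{lemma:sample-DS}, which is taken as given. One should still check that the mean $pD\sqrt{|S^*||T^*|}$ is large enough. It equals $\frac{60\alpha\log n}{\eps^2}\max(z,1/z)\sqrt{|S^*||T^*|}$, which is at least $60\log n/\eps^2$, so the tail bound $2\exp(-\eps^2\mu/3)$ is at most $n^{-6}$ as needed.
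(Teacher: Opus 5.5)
Your proof is correct, and its core step is the same as the paper's. You apply \cref{lemma:sample-DS} to $(S^*,T^*)$, note that $|S^*|/z+|T^*|z=2\sqrt{|S^*||T^*|}$ when $z=\sqrt{|S^*|/|T^*|}$, and use $\alpha\ge 1/(1-\eps)$ to conclude that $(S^*,T^*)$ satisfies \condition.

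The difference is in how the density clause is handled. The paper additionally invokes \cref{lemma:sample-condition}: with probability at least $1-n^{-6}$, every subgraph satisfying \condition has $G$-density at least $(1-\eps)D/\alpha$. It then union-bounds the two events, which is where $1-n^{-5}$ comes from. You instead observe that the witness $(S^*,T^*)$ has $G$-density exactly $D\ge(1-\eps)D/\alpha$, so that clause holds deterministically.

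Your route is shorter and uses a single probabilistic event. It yields the stronger bound $1-n^{-6}$ and fully proves the statement as literally written. The paper's route buys the uniform guarantee that the algorithms actually rely on. An extraction procedure does not return $(S^*,T^*)$ but some subgraph satisfying \condition. One therefore wants both that such a subgraph exists and that any such subgraph is $(1-\eps)/\alpha$-approximate in $G$. The paper's proof effectively establishes that conjunction, even though the existential phrasing needs only your argument.
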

\begin{proof}
If $D = \rho_G(S^*, T^*)$ and $z = \sqrt{|S^*|/|T^*|}$, we show that a subgraph exists that satisfies \condition with high probability.
Considering $E_H(S^*, T^*)$, we have with probability at least $1 - n^{-6}$ that
\[|E_H(S^*, T^*)| \geq (1-\eps)pD\sqrt{|S^*||T^*|} = \frac{(1-\eps)pD}{2}\rb{|S^*|/z + |T^*|z}\]
using \cref{lemma:sample-DS}.
So, $(S^*, T^*)$ satisfies \condition since $\alpha \geq 1/(1-\eps)$.
Then, from \cref{lemma:sample-condition}, the density of a subgraph that satisfies \condition is at least $(1-\eps)D/\alpha$ in $G$ with probability at least $1 - n^{-6}$.
Therefore, the theorem holds with probability at least $ 1 - n^{-5}$.
\end{proof}

In other words, there exist algorithms that satisfy \condition and can extract an approximate DDS from the sampled graph while only incurring a $(1-\eps)$ factor in approximation.
This implies that \Sample preserves a $(1-\eps)$-approximate DDS.
It is also not hard to see that \cref{theorem:sample-approx} can be extended to allowing approximations of $D$ and $z$.
As a result, an algorithm could run polylogarithmic instances of $D$ and $z$ to find the parameters that compute the best approximation (largest $D$).
Later sections demonstrate how to use approximations of $D$ and $z$ with \cref{theorem:sample-approx} to prove the approximation guarantees of their algorithms.
We also note that many algorithms from previous works for DDS \cite{charikar2000greedy, bahmani2012densest, bahmani2014efficient, mitrovic2025new} satisfy \condition.
As examples in \cref{section:other-condition} and \cref{section:algos-condition}, we demonstrate how they satisfy \condition and can be used to compute an approximate DDS from the sparsified graph.

\subsection{Combining low-degree peeling with sampling}
Previously, we showed how to maintain a $(1-\eps)$-approximate DDS with a sample of $O(\alpha n\log n\cdot \max(z, 1/z)/\eps^2)$ edges where $z = \sqrt{|S^*|/|T^*|}$.
However, since the factor of $\max(z, 1/z)$ can be $\Theta(\sqrt{n})$, we match the upper bound of $\tilde{\Theta}(n\sqrt{n})$ edges.
Therefore, to reduce the memory usage further, we preprocess the graph by applying low-degree peeling to it.
Let $c \leq 1/2$ be a parameter.
Then, if $z \geq 1$, we remove all vertices in $T$ with degree less than $cDz$, and if $z < 1$, we remove all vertices in $S$ with degree less than $cD/z$.
Observe that this peeling is done \textit{only on one side}, either on $S$ or on $T$.
We then combine this low-degree peeling with our sampling algorithm to get \PeelSample (\cref{alg:peel-sample}).

\emph{Remark:} We note that when $c = 1/2$, it reflects the peeling performed in \cite{mitrovic2025new} which uses $D/(2z)$ and $Dz/2$ as the peeling thresholds.
This parameter $c$ gives us the flexibility to use smaller peeling thresholds that will help attain a $(1-\eps)$-approximate DDS for our single-pass semi-streaming algorithm in \cref{section:streaming}.

\begin{algorithm}
\caption{(\PeelSample) Sparsifies graph $G$ while maintaining approximate DDS}
\label{alg:peel-sample}
\textbf{Input:} bipartite graph $G = (S, T, E)$, parameters $\eps$, $\alpha$, $c$, $D$ and $z$
\begin{algorithmic}[1]
\State $k_S \gets cD/z, k_T \gets cDz$
\If{$z\geq 1$}
    remove all vertices $v\in T$ with $d_G(v) < k_T$
\Else{}
    remove all vertices $v\in S$ with $d_G(v) < k_S$
\EndIf
\If{$\rho(S,T)\geq D$\label{line:peel-sample:den-check}}
    \Return $(S, T)$
\EndIf
\State \Return $\Sample(G, \eps, \alpha, D, z)$
\end{algorithmic}
\end{algorithm}

How do we guarantee that the number of edges in the graph drops substantially by a single iteration of peeling so that $\Sample(G, \eps, \alpha, D, z)$ only has $\tilde{O}(n/\eps^2)$ edges?
This guarantee is due to line $4$ of \cref{alg:peel-sample}.
We show in \cref{lemma:peel-memory} that if the number of edges removed is not large enough, we can immediately compute a subgraph with density at least $D$.

\begin{lemma}[Edge bound after one peeling step]
\label{lemma:peel-memory}
Consider graph $G = (S, T, E)$, $\eps\in(0,1/2)$, $\alpha \geq 1$, $c\in(0, 1/2]$, $D \geq m/n$, and $z\in[1/\sqrt{n}, \sqrt{n}]$. The number of edges remaining in $G$ after line $4$ of $\PeelSample(G, \eps, \alpha, c, D, z)$ is $O\rb{\frac{nD}{c \cdot \max(z, 1/z)}}$ if line $4$ does not return early.
\end{lemma}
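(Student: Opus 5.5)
The proof will be a direct counting argument that plays the degree lower bound guaranteed by peeling against the density upper bound implied by not returning early. By symmetry, I treat the case $z \geq 1$ in full; the case $z<1$ is identical with the roles of $S$ and $T$ exchanged and $z$ replaced by $1/z$. Let $T'\subseteq T$ be the vertices of $T$ that survive the peeling step, and let $m' = |E_G(S, T')|$ be the number of edges remaining after line $4$.

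\textbf{Step 1 (degrees survive peeling).} When $z \ge 1$, only vertices of $T$ are removed. Every edge incident to a surviving $v\in T'$ goes to $S$, which is untouched, so $v$ keeps its full degree $d_G(v) \ge k_T = cDz$. Summing over $T'$, and using that every remaining edge has exactly one endpoint in $T'$,
\[
m' = \sum_{v\in T'} d_G(v) \;\ge\; cDz\,|T'|, \qquad\text{hence}\qquad |T'| \le \frac{m'}{cDz}.
\]

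\textbf{Step 2 (no early return gives sparsity).} If line $4$ does not return, the remaining graph has density $\rho(S, T') < D$, that is, $m' < D\sqrt{|S|\,|T'|}$. Bound $|S|\le n$ and substitute the bound on $|T'|$ from Step 1:
\[
m' < D\sqrt{n\cdot \frac{m'}{cDz}} .
\]
Squaring gives $m'^2 < \frac{nD\,m'}{cz}$, so $m' < \frac{nD}{cz} = \frac{nD}{c\cdot\max(z,1/z)}$. (If $m'=0$ the claim is trivial.)

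\textbf{Step 3 (symmetric case).} For $z<1$, the same argument with the surviving set $S'\subseteq S$ and threshold $cD/z$ gives $|S'|\le m' z/(cD)$. Using $|T|\le n$, the non-return condition yields $m' < nDz/c = nD/(c\cdot \max(z,1/z))$.

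The only subtle point is Step 1: surviving vertices retain their original degrees, because peeling is performed on one side only. This is exactly why the lemma uses one-sided peeling. The rest is routine algebra, and the hypotheses $D\ge m/n$ and $\alpha$ are not actually needed for this bound.
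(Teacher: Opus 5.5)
Your proof is correct and is essentially the paper's argument: both use that one-sided peeling leaves surviving vertices with their full degree, so $|E(S,T')| \ge cDz\,|T'|$, and combine this with the non-return condition $|E(S,T')| < D\sqrt{|S|\,|T'|}$ and $|S| \le n$. The only difference is presentation. The paper first shows $|T'| < |V|/(c^2z^2)$ by contradiction and then plugs it back in, whereas you substitute the degree bound directly and solve the resulting inequality for $m'$; both give $m' < nD/(cz)$.
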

\begin{proof}
Without loss of generality, we assume that $z \geq 1$.
Initially, both $S$ and $T$ are copies of $V$. 
So, $|S|/|T| = 1 \leq z^2$.
Therefore, peeling will remove vertices from $T$.
We observe that if line $4$ does not return, then afterwards we have that $|T| < \frac{|V|}{c^2z^2}$. 
To see that, assume that $|T| \geq \frac{|V|}{c^2z^2}$. In that case, using that $S = V$, we have that
\[
	\rho(S,T) = \frac{|E(S,T)|}{\sqrt{|S||T|}} \geq \frac{k_T|T|}{\sqrt{|S||T|}} \geq \frac{k_T}{cz} = D,
\]
which contradicts line $4$ not returning.
Therefore,
$|E(S,T)| < D\sqrt{|S||T|} < \frac{D|V|}{cz}.$
Similarly, we can show that if $z < 1$, the number of edges is bounded by $Dz|V|/c$.
\end{proof}
Now, we present the memory and approximation guarantees of \PeelSample.
To attain approximation guarantees, we show that the single iteration of peeling does not affect the edges of the DDS when $D = \rho_G(S^*, T^*)$ and $z = \sqrt{|S^*|/|T^*|}$.
\begin{lemma}
\label{lemma:peel-sample-memory}
Consider graph $G = (S, T, E)$, $\eps\in(0,1/2)$, $\alpha \geq 1$, $c\in(0, 1/2]$, $D \geq m/n$, and $z\in[1/\sqrt{n}, \sqrt{n}]$.
Then, $\PeelSample(G, \eps, \alpha, c, D, z)$ either returns a subgraph with density at least $D$ or it returns a sampled graph $H$ of $G$ with $O(\alpha n\log n /(c\eps^2))$ edges with probability at least $1 - n^{-6}$.
\end{lemma}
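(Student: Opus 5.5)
The plan is to split according to whether the density check on line~\ref{line:peel-sample:den-check} of \PeelSample fires, and in the non-returning case to compose \cref{lemma:peel-memory} with \cref{lemma:sample-memory}.

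\emph{Early return.} If $\rho(S,T)\geq D$ after peeling, the algorithm returns $(S,T)$. This is a subgraph of $G$ with density at least $D$, which is the first alternative of the statement. No probability is involved.

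\emph{No early return.} Let $G'$ be the graph after peeling, with $m'$ edges. Assume without loss of generality that $z\ge 1$; the case $z<1$ is symmetric with $1/z$ in place of $z$. By \cref{lemma:peel-memory}, $m' = O\rb{nD/(c\max(z,1/z))}$, and in fact the proof there gives $m' < D|V|/(cz) \le nD/(cz)$. The algorithm then calls $\Sample$ with sampling probability
\[
p=\min\rb{1,\tfrac{60\alpha\log n}{\eps^2 D}\max(z,1/z)}.
\]
I apply \cref{lemma:sample-memory} to $G'$, or simply redo its Chernoff computation. The expected number of sampled edges is at most
\[
p\,m' \le \frac{60\alpha\log n\,\max(z,1/z)}{\eps^2 D}\cdot\frac{nD}{c\max(z,1/z)} = \frac{60\alpha n\log n}{c\eps^2}.
\]
Both $D$ and $\max(z,1/z)$ cancel. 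Chernoff bound~\ref{item:delta-at-least-1}, applied against this upper bound on the mean, gives at most twice this value with probability at least $1-n^{-6}$.

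\emph{Main obstacle.} The only subtlety is that \cref{lemma:sample-memory} is phrased in terms of the edge count of the graph passed to \Sample and needs hypotheses such as $D\le m$. Here $D$ is tied to the original $G$, not to $G'$, so I would not invoke the lemma verbatim. Instead I would repeat its argument with $m'$ in place of $m$. The concentration only needs an upper bound $\mu^+ = 60\alpha n\log n/(c\eps^2)$ on the mean, obtained by padding with dummy Bernoulli variables or by the standard monotonicity of the upper tail. Since $\mu^+\ge 60\log n$, the tail bound $\exp(-\mu^+/3)\le n^{-20}\le n^{-6}$ suffices.

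If $p=1$, the sampled graph is $G'$ itself. Then $m' < nD/(cz) \le 60\alpha n\log n\max(z,1/z)/(c\eps^2 z)$, using $p=1$, which means $60\alpha\log n\max(z,1/z)/(\eps^2 D)\ge 1$. This is again $O(\alpha n\log n/(c\eps^2))$ deterministically. This completes both cases.
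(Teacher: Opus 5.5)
Your proposal is correct and follows the paper's proof: you split on the density check, bound the surviving edges via \cref{lemma:peel-memory}, and let the factor $D/\max(z,1/z)$ cancel against $p$. You rerun the Chernoff step with the mean upper bound $\mu^+=60\alpha n\log n/(c\eps^2)$ instead of citing \cref{lemma:sample-memory} verbatim, which is extra care the paper skips, since that lemma's hypothesis $D\le m$ need not hold for the peeled graph. Your separate $p=1$ case is harmless but redundant, because $pm'\le\mu^+$ already holds for every $p$.
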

\begin{proof}
If $\PeelSample(G, \eps, \alpha, c, D, z)$ returns early on line $4$, then we compute a subgraph with density at least $D$ and we are done.
Otherwise, from \cref{lemma:peel-memory}, we have that the number of remaining edges in $G$ is $O\rb{\frac{nD}{c\cdot\max(z, 1/z)}}$.
Therefore, the sampled graph $H$ has $O(\alpha n\log n /(c\eps^2))$ edges with probability at least $1 - n^{-6}$ by using \cref{lemma:sample-memory} and observing that the factor of $D/\max(z, 1/z)$ cancels out.
\end{proof}
\begin{theorem}
\label{theorem:peel-sample-approx}
Consider graph $G = (S, T, E)$, $\eps\in(0,1/2)$, $\alpha \geq 1/(1-\eps)$, $c\in(0, 1/2]$, $D \geq m/n$, and $z\in[1/\sqrt{n}, \sqrt{n}]$.
Let $H$ be the sampled graph returned by $\PeelSample(G, \eps, \alpha, c, D, z)$.
Then, for $D = \rho_G(S^*, T^*)$ and $z = \sqrt{|S^*|/|T^*|}$, there exists a subgraph of $H$ that satisfies \condition and has density at least $(1-\eps)D/\alpha$ in $G$ with probability at least $1 - n^{-5}$.
\end{theorem}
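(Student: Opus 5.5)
The plan is to reduce to \cref{theorem:sample-approx} applied to the peeled graph. Let $G'$ be the graph remaining after the peeling step of \PeelSample (lines 1--3), so that $H = \Sample(G', \eps, \alpha, D, z)$ whenever line~\ref{line:peel-sample:den-check} does not return early. The key structural claim is that peeling removes no vertex of $(S^*, T^*)$. Hence $E_{G'}(S^*, T^*) = E_G(S^*, T^*)$ and $\rho_{G'}(S^*, T^*) = \rho_G(S^*,T^*) = D$.

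To prove the claim, assume without loss of generality that $z \ge 1$, so only vertices $v \in T$ with $d_G(v) < cDz$ are removed. By \cref{lemma:existence}, every $v \in T^*$ satisfies $d_G(v) \ge d_{G^*}(v) \ge Dz/2 \ge cDz$, using $c \le 1/2$. So no vertex of $T^*$ is peeled, and $S$ is untouched. The case $z<1$ is symmetric, with threshold $cD/z \le D/(2z)$. Since $G'$ is a subgraph of $G$, its densest subgraph has density at most $D$, and it has density at least $D$ because it contains $(S^*,T^*)$. Therefore $(S^*, T^*)$ is also a DDS of $G'$, with the same $D$ and $z$.

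Next I would rerun the argument of \cref{theorem:sample-approx} on $G'$ rather than cite it verbatim, because its hypothesis $D \ge m'/n$ needs checking. Since $m'/n \le \rho_{G'}(V,V) \le D$, this hypothesis in fact holds. By \cref{lemma:sample-DS} applied to $G'$, with probability at least $1-n^{-6}$ we have $|E_H(S^*,T^*)| \ge (1-\eps)pD\sqrt{|S^*||T^*|} = \frac{(1-\eps)pD}{2}(|S^*|/z+|T^*|z)$, using $|S^*|/|T^*| = z^2$. Since $\alpha \ge 1/(1-\eps)$, $(S^*,T^*)$ satisfies \condition. By \cref{lemma:sample-condition} applied to $G'$, with probability at least $1-n^{-6}$, every subgraph satisfying \condition has $\rho_{G'}(S',T') \ge (1-\eps)D/\alpha$.

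The step I expect to need the most care is transferring this density back to $G$ and confirming that the quantities used in the lemmas match. Transferring the density is straightforward: $G'$ is a subgraph of $G$, and density is monotone in the edge set for fixed vertex sets, so $\rho_G(S',T') \ge \rho_{G'}(S',T') \ge (1-\eps)D/\alpha$. For the second point, the sampling probability $p$ in \PeelSample depends only on $D$ and $z$, so it matches the one used in the lemmas. A union bound over the two failure events gives success probability at least $1-2n^{-6} \ge 1-n^{-5}$.
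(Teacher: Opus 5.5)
Your proposal is correct and follows the same route as the paper: you use \cref{lemma:existence} with $c \le 1/2$ to show that the one-sided peeling removes no vertex (hence no edge) of $(S^*,T^*)$, and then you rerun the argument of \cref{theorem:sample-approx} on the peeled graph. Your explicit checks that $D \ge m'/n$ still holds and that densities transfer from the peeled graph back to $G$ by monotonicity are details the paper leaves implicit, and you handle both correctly.
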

\begin{proof}
We first show that there exists a subgraph that satisfies \condition after peeling when
$D = \rho(S^*, T^*)$ and $z = \sqrt{|S^*|/|T^*|}$.
Let us consider the subgraph $(S^*, T^*)$.
From \cref{lemma:existence}, we know that $G^* = (S^*, T^*, E_G(S^*, T^*))$ satisfies $d_{G^*}(v) \geq D/(2z)$ for all $v\in S^*$ and $d_{G^*}(v) \geq Dz/2$ for all $v\in T^*$.
Therefore, the single iteration of low-degree peeling does not remove any of the vertices in $(S^*, T^*)$ since $c\leq 1/2$ and also does not remove any of the edges in $G^*$.
Then, following the argument in \cref{theorem:sample-approx}, we have that there exists a subgraph of $H$ that satisfies \condition and has density at least $(1-\eps)D/\alpha$ in $G$ with probability at least $1 - n^{-5}$.
\end{proof}
Therefore, from \cref{lemma:peel-sample-memory} and \cref{theorem:peel-sample-approx}, we have shown that \PeelSample is able to sparsify a graph so that it only has $O(n\log n/\eps^2)$ edges and preserves a $(1-\eps)$-approximate DDS.

\subsection{Peeling-based algorithms satisfy \condition}
\label{section:other-condition}
For an algorithm to extract an approximate DDS from a sparsified graph returned by \PeelSample, it needs to satisfy \condition.
In this section, we demonstrate examples of peeling-based algorithms from \cite{charikar2000greedy, bahmani2012densest, mitrovic2025new} that all satisfy \condition and can be used to compute an approximate DDS from our sparsified graph.

\subsubsection{Greedy peeling algorithm}
The algorithm from \cite{charikar2000greedy} iteratively removes the vertex in either $S$ or $T$ with the minimum degree.
If $|S|/|T| \geq z^2$, then the vertex is removed from $S$.
Otherwise, the vertex is removed from $T$.
Then, they select the densest subgraph among all the intermediate subgraphs, and they prove that it is a $0.5$-approximate DDS when $z = \sqrt{|S^*|/|T^*|}$.
However, since we want the algorithm to leverage the parameter $D$, we change how it selects the approximate DDS slightly.
We present this in \cref{alg:greedy}.
\begin{algorithm}
\caption{Greedy peeling algorithm for $0.5$-approximate DDS}
\label{alg:greedy}
\textbf{Input:} bipartite graph $G = (S, T, E)$, parameters $D$ and $z$
\begin{algorithmic}[1]
\While{$S\neq \emptyset$ and $T\neq \emptyset$}
    \If{$|S|/|T| \geq z^2$}
        \State $i_\text{min}\gets$ vertex $i\in S$ that minimizes $|E(\{i\}, T)|$
        \If{$|E(\{i_\text{min}\}, T)| \geq D/(2z)$}
            \Return $(S, T)$
        \EndIf
        \State Remove $i_\text{min}$ from $S$
    \Else
        \State $j_\text{min}\gets$ vertex $j\in T$ that minimizes $|E(S, \{j\})|$
        \If{$|E(S, \{j_\text{min}\})| \geq Dz/2$}
            \Return $(S, T)$
        \EndIf
        \State Remove $j_\text{min}$ from $T$
    \EndIf
\EndWhile
\State \Return $(\emptyset, \emptyset)$
\end{algorithmic}
\end{algorithm}

Instead of choosing the subgraph with the largest density among all the intermediate subgraphs, lines $4$ and $8$ leverage the parameter $D$ to select the intermediate subgraph which will satisfy \condition.
Now, we have the following lemma about \cref{alg:greedy} satisfying \condition.
\begin{lemma}
\label{lemma:greedy}
Consider graph $G = (S, T, E)$, $\eps\in(0,1)$, $D \geq m/n$, and $z\in[1/\sqrt{n}, \sqrt{n}]$.
If \cref{alg:greedy} returns a nonempty subgraph $(S', T')$, it satisfies
\[|E_G(S', T')| \geq \frac{D}{4}\rb{|S'|/z + |T'|z}.\]
When $D = \rho_G(S^*, T^*)$ and $z = \sqrt{|S^*|/|T^*|}$, \cref{alg:greedy} never returns an empty subgraph.
\end{lemma}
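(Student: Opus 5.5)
The statement has two parts, and I will prove each by looking at the state of the algorithm at the moment it returns.

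\emph{Part 1: a returned subgraph satisfies the bound.} Suppose the algorithm returns $(S',T')$ in the branch $|S'|/|T'| \geq z^2$. At that moment every $i\in S'$ satisfies $|E(\{i\},T')| \geq D/(2z)$, because $i_\text{min}$ minimizes this quantity. Summing over $S'$ gives
\[
|E_G(S',T')| \geq \frac{D|S'|}{2z}.
\]
From $|S'| \geq z^2|T'|$ we get $|S'|/z \geq |T'|z$. Hence $|S'|/z \geq \tfrac12\rb{|S'|/z+|T'|z}$, and therefore
\[
|E_G(S',T')| \geq \frac{D}{4}\rb{|S'|/z+|T'|z}.
\]
The branch $|S'|/|T'| < z^2$ is symmetric. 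Every $j\in T'$ has in-degree at least $Dz/2$, so $|E_G(S',T')| \geq Dz|T'|/2$. Since $|T'|z > |S'|/z$, the same bound follows.

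\emph{Part 2: with the true parameters, the algorithm never returns the empty subgraph.} Set $D=\rho_G(S^*,T^*)$ and $z=\sqrt{|S^*|/|T^*|}$. I will prove the invariant that $S^*\subseteq S$ and $T^*\subseteq T$ hold at every iteration in which the algorithm has not yet returned. The invariant holds initially.

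Suppose it holds at some iteration, and the algorithm is in the $S$-branch and does not return. Then $i_\text{min}$ has degree less than $D/(2z)$ into the current $T$. If $i_\text{min}$ were in $S^*$, its degree into $T\supseteq T^*$ would be at least its degree into $T^*$. By \cref{lemma:existence} that degree is at least $D/(2z)$, a contradiction. So $i_\text{min}\notin S^*$, and removing it preserves the invariant. The $T$-branch is identical, using the bound $Dz/2$ from \cref{lemma:existence}.

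Each non-returning iteration removes one vertex, so the loop terminates. It can only exit with the empty return when $S$ or $T$ becomes empty. The invariant forbids this, since $S^*$ and $T^*$ are nonempty. Therefore the algorithm returns some $(S',T')\supseteq(S^*,T^*)$, which is nonempty.

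The main point to get right is the monotonicity used in Part 2: the degree of $i$ into the current $T$ dominates its degree into $T^*$, and this is exactly what the containment $T\supseteq T^*$ guarantees. The rest is routine. The hypotheses $D\geq m/n$ and $\eps$ play no role in this argument.
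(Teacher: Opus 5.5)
Your proof is correct and follows the paper's approach. For the first part you sum the minimum-degree bound over the side being peeled and use $|S'|/z \geq |T'|z$ (or its reverse), and for the second you use \cref{lemma:existence} to show no vertex of $(S^*,T^*)$ is ever removed, stating as an explicit invariant with a termination argument what the paper phrases as ``the first vertex of $(S^*,T^*)$ to be removed.''
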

\begin{proof}
Without loss of generality, let us assume that \cref{alg:greedy} returns a nonempty subgraph $(S', T')$ on line $4$.
Since $i_\text{min}$ has degree at least $D/(2z)$ in subgraph $(S', T')$ and is also the vertex with minimum degree in $S'$, then we have that
\[|E_G(S', T')| \geq |S'| \cdot \frac{D}{2z} \geq \frac{D}{4}\rb{|S'|/z + |T'|z}\]
where we used $|S'|/|T'| \geq z^2$.
Now, if $D = \rho_G(S^*, T^*)$ and $z = \sqrt{|S^*|/|T^*|}$, we know that $G^* = (S^*, T^*, E_G(S^*, T^*))$ satisfies $d_{G^*}(v) \geq D/(2z)$ for all $v\in S^*$ and $d_{G^*}(v) \geq Dz/2$ for all $v\in T^*$ from \cref{lemma:existence}.
Therefore, when the first vertex in $(S^*, T^*)$ is removed, it will have degree at least $D/(2z)$ if it is in $S^*$ and degree at least $Dz/2$ if it is in $T^*$.
As a result, \cref{alg:greedy} will return a nonempty subgraph from line $4$ or line $8$.
\end{proof}

Therefore, the peeling algorithm from \cite{charikar2000greedy} satisfies \condition with $\alpha = 2$.
Additionally, when $D = \rho_G(S^*, T^*)$ and $z = \sqrt{|S^*|/|T^*|}$, running \cref{alg:greedy} on the sparsified graph returned by $\PeelSample(G, \eps, \alpha, D, z)$ computes a $(0.5 - \eps)$-approximate DDS whp using \cref{lemma:greedy} and \cref{theorem:peel-sample-approx}.
The peeling algorithm from \cite{bahmani2012densest} follows very closely with this greedy algorithm and a similar argument shows that it satisfies \condition with $\alpha = 2/(1-\eps)$.
As a result, it also computes a $(0.5-\eps)$-approximate DDS whp from a sparsified graph returned by \PeelSample.

\subsubsection{Fixed threshold peeling algorithm}
The peeling algorithm from \cite{mitrovic2025new} uses a fixed threshold for the degrees of vertices when peeling.
Specifically, during each iteration, all vertices in $S$ with degree less than $D/(2z)$ and all vertices in $T$ with degree less than $Dz/2$ are removed.
We present this in \cref{alg:fixed-peeling}.
\begin{algorithm}
\caption{Fixed threshold peeling algorithm for $(0.5-\eps)$-approximate DDS}
\label{alg:fixed-peeling}
\textbf{Input:} bipartite graph $G = (S, T, E)$, parameters $\eps$, $D$ and $z$
\begin{algorithmic}[1]

\State $k_S \gets D/(2z)$, $k_T \gets Dz/2$

\While{true}

    \State $A\gets$ all vertices $v\in S$ with $d_{G}(v) < k_S$
    
    \State $B\gets$ all vertices $v\in T$ with $d_{G}(v) < k_T$

    \If{$|S|/|T| \geq z^2$ and $|A| \leq \eps|S|$}
        \Return $(S, T)$
    \EndIf
    \If{$|S|/|T| \leq z^2$ and $|B| \leq \eps|T|$}
        \Return $(S, T)$
    \EndIf

    \State Remove $A$ from $S$ and $B$ from $T$

\EndWhile
\end{algorithmic}
\end{algorithm}

Then, we have the following lemma about \cref{alg:fixed-peeling} satisfying \condition.
\begin{lemma}
Consider graph $G = (S, T, E)$, $\eps\in(0,1)$, $D \geq m/n$, and $z\in[1/\sqrt{n}, \sqrt{n}]$.
If \cref{alg:fixed-peeling} returns a nonempty subgraph $(S', T')$, it satisfies
\[|E_G(S', T')| \geq \frac{(1-\eps)D}{4}\rb{|S'|/z + |T'|z}.\]
When $D = \rho_G(S^*, T^*)$ and $z = \sqrt{|S^*|/|T^*|}$, \cref{alg:fixed-peeling} never returns an empty subgraph.
\end{lemma}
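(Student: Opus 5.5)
The proof has two parts: the edge bound whenever the algorithm returns, and the fact that it cannot return empty when the parameters are exact. Both follow the proof of \cref{lemma:greedy}, replacing ``minimum-degree vertex'' by ``all but an $\eps$-fraction of vertices''.

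\textbf{Edge bound.} By symmetry, assume the algorithm returns $(S',T')$ at line 5. Then $|S'|/|T'|\ge z^2$ and the set $A$ of vertices in $S'$ with degree below $k_S=D/(2z)$ satisfies $|A|\le\eps|S'|$. So at least $(1-\eps)|S'|$ vertices of $S'$ have degree at least $D/(2z)$ into $T'$. Since the graph is bipartite, every edge is counted once from its $S'$ endpoint, which gives
\[
|E_G(S',T')| \ge (1-\eps)|S'|\frac{D}{2z}.
\]
The ratio condition gives $|T'|z \le |S'|/z$. Hence $|S'|/z \ge \tfrac12\rb{|S'|/z+|T'|z}$, and therefore
\[
|E_G(S',T')| \ge \frac{(1-\eps)D}{4}\rb{|S'|/z+|T'|z}.
\]
The case of a return at line 6 is symmetric: use $k_T=Dz/2$ and $|S'|/z\le |T'|z$. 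If the ratio equals $z^2$ exactly, either branch may fire, and both arguments apply.

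\textbf{Never empty when $D$ and $z$ are exact.} Set $D=\rho_G(S^*,T^*)$ and $z=\sqrt{|S^*|/|T^*|}$. By \cref{lemma:existence}, in $G^*$ every $v\in S^*$ has degree at least $D/(2z)=k_S$ and every $v\in T^*$ has degree at least $Dz/2=k_T$. I will show by induction on the iterations that $S^*\subseteq S$ and $T^*\subseteq T$ always hold. While the containment holds, each $v\in S^*$ keeps all its $G^*$-edges inside the current graph, so its current degree is at least $k_S$ and $v\notin A$. The same argument shows $T^*\cap B=\emptyset$. Thus one removal step preserves the containment.

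It follows that $S$ and $T$ remain nonempty throughout, and any subgraph the algorithm returns contains $(S^*,T^*)$, so it is nonempty.

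\textbf{Main obstacle: termination.} The remaining issue is that the loop actually returns, rather than running forever with empty $S$ or $T$. Suppose the algorithm does not return in some iteration. If $|S|/|T|\ge z^2$, then $|A|>\eps|S|$, so $S$ shrinks by a $(1-\eps)$ factor. Symmetrically, if $|S|/|T|\le z^2$, then $T$ shrinks by that factor. So each non-returning iteration strictly decreases $|S|+|T|$, which is a positive integer. Since $S^*\subseteq S$ and $T^*\subseteq T$ keep both sides nonempty, the loop cannot run indefinitely. It must therefore return at some iteration, and by the previous paragraph the returned subgraph is nonempty.
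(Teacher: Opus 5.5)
Your proof is correct and takes essentially the same route as the paper. The edge bound comes from the $(1-\eps)|S'|$ vertices of degree at least $k_S = D/(2z)$ together with $|S'|/|T'| \ge z^2$, and non-emptiness follows from \cref{lemma:existence}, which guarantees that no vertex of $S^*$ or $T^*$ is ever peeled. Your explicit termination argument, that every non-returning iteration removes at least one vertex while $S^* \subseteq S$ and $T^* \subseteq T$ keep both sides nonempty, makes rigorous a step the paper leaves implicit.
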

\begin{proof}
Without loss of generality, let us assume that \cref{alg:fixed-peeling} returns a nonempty subgraph $(S', T')$ on line $5$.
Then, we observe that at least $(1-\eps)|S'|$ vertices in $S'$ have degree at least $k_S$ in the subgraph $(S', T')$.
Therefore,
\[|E_G(S', T')| \geq (1-\eps)|S'| \cdot k_S = \frac{(1-\eps)D}{2} \rb{|S'|/z} \geq \frac{(1-\eps)D}{4}\rb{|S'|/z + |T'|z}\]
where we used $|S'|/|T'| \geq z^2$.
Now, if $D = \rho_G(S^*, T^*)$ and $z = \sqrt{|S^*|/|T^*|}$, we know that $G^* = (S^*, T^*, E_G(S^*, T^*))$ satisfies $d_{G^*}(v) \geq D/(2z)$ for all $v\in S^*$ and $d_{G^*}(v) \geq Dz/2$ for all $v\in T^*$ from \cref{lemma:existence}.
So, the vertices in $S^*$ and $T^*$ will never be removed during peeling and \cref{alg:fixed-peeling} must return a nonempty subgraph.
\end{proof}
From this lemma, we see that the peeling algorithm from \cite{mitrovic2025new} satisfies \condition with $\alpha = 2/(1-\eps)$.
Similarly to the previous peeling algorithms, \cref{alg:fixed-peeling} computes a $(0.5-\eps)$-approximate DDS whp from a sparsified graph returned by \PeelSample.

\section{Semi-streaming algorithms}
\label{section:streaming}
In this section, we adapt the \PeelSample algorithm to the semi-streaming setting.
A natural implementation leads to a two-pass algorithm but to attain a single-pass algorithm, we require additional new ideas. Specifically, we modify the single iteration of peeling so that it can be done at the same time as sampling, allowing \PeelSample to be approximately simulated in a single pass.

\subsection{Warm-up: Two-pass algorithm}
Directly simulating \PeelSample results in a two-pass semi-streaming algorithm that computes a $(1-\eps)$-approximate DDS.
The first pass simulates low-degree peeling and the second pass simulates sampling.
We present this in \cref{alg:twopass}.

\begin{algorithm}
\caption{Two-pass semi-streaming algorithm for $(1-\eps)$-approximate DDS}
\label{alg:twopass}
\textbf{Input:} vertex sets $(S, T)$, parameters $\eps$, $D$ and $z$
\begin{algorithmic}[1]
\State $k_S \gets D/(2z), k_T \gets Dz/2$
\Statex
\State \textbf{Pass 1 (Low-Degree Peeling)}
\State $d(v)\gets 0$ for all $v \in V$
\While{stream is not empty}
    \State $(u, v) \gets $ next edge from the stream
    \If{$z \geq 1$}
        $d(v) = d(v) + 1$
    \Else{}
        $d(u) = d(u) + 1$
    \EndIf
\EndWhile
\If{$z\geq 1$}
    remove all vertices $v\in T$ with $d(v) < k_T$
\Else{}
    remove all vertices $v\in S$ with $d(v) < k_S$
\EndIf
\If{$\rho(S,T)\geq D$}\label{line:twopass:den-check}
    \Return $(S, T)$
\EndIf
\Statex
\State \textbf{Pass 2 (Sampling)}
\State $H\gets \emptyset$, $\alpha \gets 1/(1-\eps)$
\State $p \gets \min\rb{1, \frac{60\alpha\log n}{\eps^2 D}\max\rb{z, 1/z}}$
\While{stream is not empty}
    \State $(u, v) \gets $ next edge from the stream
    \If{$u\in S$ and $v\in T$}
        add $(u, v)$ to $H$ with prob. $p$
    \EndIf
\EndWhile
\State $(S', T')\gets$ densest subgraph of $H$ that satisfies \condition, or $(\emptyset, \emptyset)$ if no subgraph exists
\State \Return $(S', T')$
\end{algorithmic}
\end{algorithm}

Note that in \cref{alg:twopass}, calculating $\rho(S,T)$ does not need another pass since it can be directly calculated by subtracting the degree of all vertices that have been removed from the total number of edges.
This is true because vertices are only removed from one side, either from $S$ or from $T$.
Now, we present the full semi-streaming algorithm in the following theorem.

\begin{theorem}
\label{theorem:twopass}
For $\eps\in(0, 1/4)$, there exists a two-pass semi-streaming algorithm using $\tO(n/\eps^2)$ memory that computes a $(1-4\eps)$-approximate DDS with probability at least $1 - n^{-4}$.
\end{theorem}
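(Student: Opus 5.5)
The plan is to run \cref{alg:twopass} in parallel over a geometric grid of guesses for $D$ and $z$, and output the best answer. Let $D^* = \rho_G(S^*,T^*)$ and $z^* = \sqrt{|S^*|/|T^*|}$.

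Since $m/n \le D^* \le \sqrt{m}\le n$, we take $D \in \{(1+\eps)^i \cdot m/n\}$, which is $O(\log n/\eps)$ values. Since $z^*\in[1/\sqrt n,\sqrt n]$, we take $z\in\{(1+\eps)^j/\sqrt n\}$, again $O(\log n/\eps)$ values. The counter $m$ is not known in advance. The first pass can count it exactly alongside the degree counters, so only the second pass needs the grid for $D$. Alternatively, we index $D$ by absolute powers of $(1+\eps)$ up to $n$; this gives $O(\log n/\eps)$ values, and guesses below $m/n$ are discarded after pass 1. Both passes run all instances simultaneously.

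\emph{Memory.} Pass 1 stores one degree counter per vertex per guess of $z$, which is $\tO(n/\eps)$. For each pair $(D,z)$, \cref{lemma:peel-sample-memory} with $c=1/2$ and $\alpha = 1/(1-\eps)$ shows that either the check on line~\ref{line:twopass:den-check} returns or $H$ has $O(n\log n/\eps^2)$ edges, with probability $1-n^{-6}$. Summing over the $O(\log^2 n/\eps^2)$ pairs would give $\tO(n/\eps^4)$. To achieve the stated $\tO(n/\eps^2)$, we instead keep only $O(\log n)$ guesses of each parameter by using grid ratio $2$ rather than $1+\eps$. This costs only constants in the edge-threshold analysis, which we absorb by rescaling $\alpha$. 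This is the bookkeeping point most likely to need care, and it explains the final constant $4\eps$.

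\emph{Approximation.} Pick the grid point with $D\in[(1-\eps)D^*, D^*]$ and $z$ within a constant (or $1+\eps$) factor of $z^*$. We then need a perturbed version of \cref{theorem:peel-sample-approx}, and verifying it is the main obstacle. Three things must be checked.

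First, peeling must still not remove $S^*\cup T^*$. With $k_T = Dz/2$ and $z \ge z^*$, the threshold could exceed $D^*z^*/2$. We therefore use $c$ slightly smaller than $1/2$, say thresholds $cDz$ with $c = 1/(2(1+\eps))$ after rounding $z$ down appropriately. \cref{lemma:existence} then keeps every vertex of $(S^*,T^*)$.

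Second, $(S^*,T^*)$ must satisfy \condition for the guessed $z$. \cref{lemma:sample-DS} gives $|E_H(S^*,T^*)|\ge(1-\eps)pD\sqrt{|S^*||T^*|}$. Next, $\sqrt{|S^*||T^*|}\ge \frac{1}{2(1+\eps)}(|S^*|/z+|T^*|z)$ when $z/z^*\in[1,1+\eps]$, because $\frac{1}{2}(x+1/x)\le 1+\eps$ for $x\le1+\eps$. So \condition holds once $\alpha \ge 1/((1-\eps)^2)$.

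Third, \cref{lemma:sample-condition} holds for any $D,z$, so any returned subgraph satisfying \condition has density at least $(1-\eps)D/\alpha \ge (1-\eps)^4 D^* \ge (1-4\eps)D^*$. If instead the check on line~\ref{line:twopass:den-check} returns, the returned subgraph has density at least $D\ge(1-\eps)D^*$.

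Finally, we output the densest candidate over all guesses. To compare candidates, each one certifies a density lower bound: $D$ for early returns, and $(1-\eps)D/\alpha$ for \condition outputs. We output the candidate with the largest certified bound, which avoids needing exact densities in $G$. A union bound over the $\poly\log n$ instances, each failing with probability $O(n^{-5})$, gives success probability $1-n^{-4}$.
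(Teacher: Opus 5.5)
\textbf{Genuine gap: the $\tilde{O}(n/\eps^2)$ memory bound.} Your approximation argument is essentially the paper's:
\begin{itemize}
\item run \cref{alg:twopass} over a grid of $(D,z)$;
\item use \cref{lemma:existence} to see that the one-sided peeling keeps $(S^*,T^*)$;
\item use \cref{lemma:sample-DS} to see that $(S^*,T^*)$ meets \textsc{edge-threshold};
\item use \cref{lemma:sample-condition} to certify every output.
\end{itemize}
Your choices $c=1/(2(1+\eps))$, $z$ rounded up and $\alpha=(1-\eps)^{-2}$, in place of the paper's $D\le(1-\eps)D^*$, $z\le z^*$, $\alpha=1/(1-\eps)$, are harmless.

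The problem is that this analysis needs a grid point with $D\in[(1-\eps)D^*,D^*]$ and $z/z^*\in[1,1+\eps]$. Only a $(1+\eps)$-grid guarantees such a point. On that grid, as you computed yourself, there are $\Theta(\log^2 n/\eps^2)$ instances with $O(n\log n/\eps^2)$ sampled edges each, i.e.\ $\tilde{O}(n/\eps^4)$ memory.

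Your ratio-$2$ grid cannot be combined with your approximation section. On that grid the best guess only satisfies $D\ge D^*/2$, and $z$ may be off by a factor $2$. Then an early return certifies only density $D\ge D^*/2$, and \cref{lemma:sample-condition} certifies only $(1-\eps)D/\alpha$. Since $\alpha$ is exactly the approximation loss in that lemma, ``absorbing constants into $\alpha$'' turns the guarantee into a constant-factor approximation, not $1-4\eps$. Also, the $4\eps$ comes from $(1-\eps)^4$ on the fine grid, not from coarse-grid bookkeeping.

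For comparison, the paper's proof also uses powers of $(1+\eps)$ and simply counts the instances as $O(\log^2 n)$, so it does not track this $1/\eps^2$ either. Your concern is legitimate, but your proposal does not resolve it. A real fix needs an argument that tolerates constant-factor errors in $D$ and $z$; one possible route is the following.
\begin{itemize}
\item Exploit that the output is the \emph{densest} pair of $H$ meeting \textsc{edge-threshold}. A uniform Chernoff bound at scale $\frac{pD}{2\alpha}(|S'|/z+|T'|z)$ gives $p\,|E_G(S',T')|\ge(1-\eps)|E_H(S',T')|$ for every such pair. Hence the output has $\rho_G\ge(1-\eps)\rho_H(S^*,T^*)/p\ge(1-\eps)^2D^*$ as soon as $(S^*,T^*)$ survives peeling and meets \textsc{edge-threshold}. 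That holds for $D\in[D^*/2,D^*]$ and $z\in[z^*/2,2z^*]$ with $c=1/4$ and a constant $\alpha$.
\item Select among instances by the estimate $|E_H(S',T')|/(p\sqrt{|S'||T'|})$ rather than by $D$.
\item Replace the early return, which certifies only $D$, by a cap on $|H|$. This is justified because after the one-sided peeling $|E(S,T)|\le (D^*)^2n/(cD\max(z,1/z))$. So an instance with $D\ge D^*/2$ samples only $O(n\log n/(c\eps^2))$ edges with high probability.
\end{itemize}
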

\begin{proof}
Let $\Astream$ be our two-pass semi-streaming algorithm. We now describe it and provide its analysis.

\paragraph{Algorithm description.}
$\Astream$ will run \cref{alg:twopass} on all $m/n \leq D\leq n$ and $\frac{1}{\sqrt{n}} \leq z \leq \sqrt{n}$ in parallel using powers of $(1+\eps)$.
Then, out of all the outputs of \cref{alg:twopass} where the vertex sets are both non-empty, the one corresponding to the largest $D$ is the final output of $\Astream$.

\paragraph{Algorithm memory.}
During the first pass of \cref{alg:twopass}, we only use $O(n)$ memory.
During the second pass, we have that the sample $H$ uses $O(n\log n/\eps^2)$ memory from \cref{lemma:peel-sample-memory} with probability at least $1 - n^{-6}$.
Therefore, over all $O(\log^2 n)$ parallel instances of \cref{alg:twopass}, $\Astream$ uses $O(n\log^3 n/\eps^2)$ memory in total.

\paragraph{Algorithm approximation.}
For all parallel instances of $D$ and $z$, if \cref{alg:twopass} returns early on line $10$, a subgraph with density at least $D$ in $G$ is computed. Otherwise, we have the guarantee that the subgraph returned, if it is not empty, has density at least $(1-2\eps)D$ in $G$ with probability at least $1 - n^{-5}$ from \cref{lemma:sample-condition}.
Now, we know there exists an instance where
\[(1-2\eps)\rho_G(S^*, T^*) \leq D \leq (1-\eps)\rho_G(S^*, T^*) \text{ and } (1-\eps)\sqrt{|S^*|/|T^*|}\leq z \leq \sqrt{|S^*|/|T^*|}.\]
With these approximate $D$ and $z$, and $\alpha = 1/(1 - \eps)$, we can follow the argument in \cref{theorem:peel-sample-approx} to see that a subgraph satisfying \condition exists, specifically $(S^*, T^*)$, and it has density at least $(1-2\eps)D$ in $G$ with probability at least $1 - n^{-5}$ as well.
Since $\Astream$ picks the subgraph corresponding to the largest $D$ that satisfies \condition, it finds a $(1-4\eps)$-approximate DDS. This is true with probability at least $1 - n^{-4}$ by taking the union bound over all instances.
\end{proof}

\subsection{Single-pass algorithm}
To remove the first pass of our two-pass algorithm, we need to simulate peeling at the same time as sampling.
However, this is difficult since it is possible that we are unable to determine which vertices should be removed during peeling until going through a majority of the stream.
By then, our sampling could potentially have saved too many edges and surpass $\omega(n\poly \log n)$ edges.

To circumvent this, our algorithm simply removes the first $k_T - 1$ edges of each vertex in $T$, if $z \geq 1$, or the first $k_S - 1$ edges of each vertex in $S$, if $z < 1$.
We observe that the set of edges removed in this way contains the set of edges that should be removed during peeling, and some additional edges.
However, we can no longer argue that all the edges of the densest subgraph are maintained, like in \PeelSample.
Though edges in the densest subgraph may be removed, we instead argue that not too many of its edges are removed and it still remains as an approximate densest subgraph.
We do this by peeling using thresholds $k_S = \eps D/z$ and $k_T = \eps Dz$ instead of $D/(2z)$ and $Dz/2$, respectively.
Leveraging this, we show that the additional edges that we lose do not affect the approximation much and we still compute a $(1-\eps)$-approximate DDS.

\begin{algorithm}
\caption{Single-pass semi-streaming algorithm for $(1-\eps)$-approximate DDS}
\label{alg:singlepass}
\textbf{Input:} vertex sets $(S, T)$, parameters $\eps$, $D$ and $z$
\begin{algorithmic}[1]
\State $k_S \gets \eps D/z, k_T \gets \eps Dz$
\State $d(v)\gets 0$ for all $v \in V$
\State $H\gets \emptyset$, $\alpha \gets 1/(1-\eps)$
\State $p \gets \min\rb{1, \frac{60\alpha\log n}{\eps^3 D}\max\rb{z, 1/z}}$
\While{stream is not empty}
    \State $(u, v) \gets $ next edge from the stream
    \If{$z \geq 1$}
        \State $d(v) = d(v) + 1$
        \If{$d(v) \geq k_T$}
            add $(u, v)$ to $H$ with prob. $p$
        \EndIf
    \Else
        \State $d(u) = d(u) + 1$
        \If{$d(u) \geq k_S$}
            add $(u, v)$ to $H$ with prob. $p$
        \EndIf
    \EndIf
    \If{$|H| > \frac{60(1+\eps)\alpha n\log n}{\eps^3}$}\label{line:singlepass:sample-size}
        $H\gets\emptyset$
    \EndIf
\EndWhile
\If{$z\geq 1$}
    remove all vertices $v\in T$ with $d(v) < k_T$
\Else{}
    remove all vertices $v\in S$ with $d(v) < k_S$
\EndIf
\If{$\rho(S,T)\geq D$\label{line:singlepass:den-check}}
    \Return $(S, T)$
\EndIf
\State $(S', T')\gets$ densest subgraph of $H$ that satisfies \condition, or $(\emptyset, \emptyset)$ if no subgraph exists
\State \Return $(S', T')$
\end{algorithmic}
\end{algorithm}
On line $13$ of \cref{alg:singlepass}, we make sure that if the number of edges exceeds $\frac{60(1+\eps)\alpha n\log n}{\eps^3}$, we set the edge set to $\emptyset$.
This is because if the number of edges goes above $\frac{60(1+\eps)\alpha n\log n}{\eps^3}$, with high probability the condition on line $16$ would be true and return a subgraph with density at least $D$ because of \cref{lemma:peel-sample-memory}.
We now show the following guarantee about \cref{alg:singlepass}.

\begin{lemma}
\label{lemma:singlepass:approx}
\cref{alg:singlepass} returns either a subgraph with density at least $(1-2\eps)D$ in $G$ or an empty subgraph with probability at least $1 - n^{-5}$. When given $(1-2\eps)\rho_G(S^*, T^*)\leq D \leq (1-\eps)\rho_G(S^*, T^*)$ and $(1-\eps)\sqrt{|S^*|/|T^*|} \leq z \leq \sqrt{|S^*|/|T^*|}$, \cref{alg:singlepass} returns a subgraph with density at least $(1-2\eps)D$ in $G$ with probability at least $1 - n^{-5}$.
\end{lemma}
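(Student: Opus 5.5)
We treat the two claims separately and assume throughout, without loss of generality, that $z\ge 1$, so all discarding happens on the $T$ side.

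\textbf{Soundness.} There are three ways the algorithm can terminate.
\begin{itemize}
\item If it returns on line~\ref{line:singlepass:den-check}, the returned pair has density at least $D$ in $G$. The vertex degrees $d(v)$ are exact full degrees, so $\rho(S,T)$ is computed exactly, as in the two-pass case.
\item If it returns $(\emptyset,\emptyset)$, there is nothing to prove.
\item Otherwise it returns a subgraph $(S',T')$ of $H$ satisfying \condition. Here $H$ is a $p$-sample of some edge set $E'\subseteq E$ (possibly restarted on line~\ref{line:singlepass:sample-size}). Which edges lie in $E'$ is decided deterministically by the stream order, independently of the coins, and each edge of $E'$ is kept independently with probability $p$. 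Moreover $p$ is at least the value used in \Sample with $\eps$ in place of $\eps$, since $1/\eps^3\ge 1/\eps^2$. So the proof of \cref{lemma:sample-condition} applies verbatim to the graph $(S,T,E')$: any $(S',T')$ with $\rho_{E'}(S',T')<(1-\eps)D/\alpha$ fails \condition w.h.p. Since $E'\subseteq E$, density in $G$ is at least density in $E'$, giving at least $(1-\eps)^2D\ge(1-2\eps)D$.
\end{itemize}
The resetting of $H$ needs care. The final $H$ is a $p$-sample of the edges after the last reset, which is again a fixed suffix-type set once we condition appropriately. To avoid adaptivity issues, the cleanest route is a union bound over all $O(m)$ possible reset positions: each suffix set is fixed in advance, and each bad event has probability $n^{-6}$.

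\textbf{Completeness, under the given ranges of $D$ and $z$.} First, if line~\ref{line:singlepass:den-check} fires, we are done. Otherwise, \cref{lemma:peel-memory} with $c=\eps$ bounds the surviving edges by $O(nD/(\eps z))$. The edges offered to the sampler are a subset of the surviving edges: an edge $(u,v)$ is offered only if $d(v)\ge k_T$ at that time, hence $v$ survives. So the expected size of $H$ is at most $60\alpha n\log n/\eps^3$ up to the constant. More precisely, $|E(S,T)|<D\sqrt{|S||T|}\le D n/(\eps z)$, giving expectation at most $60\alpha n\log n/\eps^3$. By Chernoff, w.h.p.\ $|H|$ never exceeds the threshold of line~\ref{line:singlepass:sample-size}, so no reset happens. (If the chosen constants make this tight, I would adjust the threshold so the Chernoff slack fits.)

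Next we show that $(S^*,T^*)$ keeps enough edges in $E'$. Each $v\in T^*$ loses at most $k_T-1<\eps Dz$ edges. Since $\rho_G(S^*,T^*)\ge D$ and $D\ge(1-2\eps)\rho^*$, we have $\sqrt{|S^*|/|T^*|}\ge z$, so the total loss is at most
\[
\eps Dz|T^*|\le \eps D\sqrt{|S^*||T^*|}\le \eps\rho^*\sqrt{|S^*||T^*|}.
\]
Hence $|E'(S^*,T^*)|\ge(1-\eps)\rho^*\sqrt{|S^*||T^*|}$. No $T^*$ vertex is removed, because its degree is at least $\rho^*\sqrt{|S^*|/|T^*|}/2\ge Dz/2\ge \eps Dz$.

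Sampling then gives, by Chernoff as in \cref{lemma:sample-DS}, $|E_H(S^*,T^*)|\ge(1-\eps)^2 p\rho^*\sqrt{|S^*||T^*|}$. It remains to check \condition with $\alpha=1/(1-\eps)$, which requires at least $\frac{(1-\eps)pD}{2}(|S^*|/z+|T^*|z)$. Writing $z^*=\sqrt{|S^*|/|T^*|}$ and $z\in[(1-\eps)z^*,z^*]$, we get $|S^*|/z+|T^*|z\le (\frac{1}{1-\eps}+1)\sqrt{|S^*||T^*|}$. Together with $D\le(1-\eps)\rho^*$, this reduces to the inequality
\[
(1-\eps)^2\ge \frac{(1-\eps)^2}{2}\Big(\tfrac{1}{1-\eps}+1\Big),
\]
which holds up to an $O(\eps^2)$ term.

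\textbf{Main obstacle.} This constant-chasing step is where I expect friction, and it may require slightly loosening constants, e.g.\ replacing $\eps$ by $\eps/2$ in the thresholds. Once \condition holds, the returned subgraph is nonempty, and soundness gives density $(1-2\eps)D$.
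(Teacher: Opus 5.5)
Your route is the paper's: soundness from \cref{lemma:sample-condition}, and completeness by bounding how many edges of $(S^*,T^*)$ the ``first $k_T-1$ edges'' rule discards, then applying Chernoff to show $(S^*,T^*)$ passes \textsc{edge-threshold} in $H$. Two of your steps are more careful than the paper: the handling of resets in the soundness part, and the check that no vertex of $T^*$ is peeled ($D^*z^*/2\ge Dz/2\ge \eps Dz$).

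\textbf{The completeness step does not close as written.} You reduce to $(1-\eps)^2\ge\frac{(1-\eps)^2}{2}\bigl(\frac{1}{1-\eps}+1\bigr)$. This is false for every $\eps>0$, and not by $O(\eps^2)$: the right side exceeds the left by $\eps(1-\eps)/2$. The cause is that you bound $|S^*|/z$ via $z\ge(1-\eps)z^*$ and $|T^*|z$ via $z\le z^*$ separately. These cannot be tight simultaneously, and together they cost a first-order factor $1+\frac{\eps}{2(1-\eps)}$. Meanwhile, both $(1-\eps)$ slack factors (from $\alpha$ and from $D\le(1-\eps)D^*$) are already consumed by the discarded edges and the Chernoff deviation. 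Replacing $\eps$ by $\eps/2$ in the thresholds would change \cref{alg:singlepass}, so it would not prove the lemma as stated.

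The missing observation is that the error in $z$ enters only quadratically. Let $\gamma=z/z^*\in[1-\eps,1]$; since $\gamma+1/\gamma-2=(1-\gamma)^2/\gamma$,
\[
\frac{|S^*|}{z}+|T^*|z=\Bigl(\gamma+\frac{1}{\gamma}\Bigr)\sqrt{|S^*||T^*|}\le\Bigl(2+\frac{\eps^2}{1-\eps}\Bigr)\sqrt{|S^*||T^*|}.
\]
Also keep the factor $D/D^*\le 1-\eps$ in the loss, i.e.\ $\eps Dz|T^*|\le\eps(1-\eps)|E_G(S^*,T^*)|$. With Chernoff deviation $\eps$, the requirement becomes
\[
(1-\eps)\bigl(1-\eps+\eps^2\bigr)\,p\,|E_G(S^*,T^*)|\;\ge\;(1-\eps)\Bigl(1-\eps+\tfrac{\eps^2}{2}\Bigr)\,p\,|E_G(S^*,T^*)|,
\]
which holds. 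The case $z<1$ is symmetric.

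\textbf{Your no-reset computation has an arithmetic slip.} With $p$ as written in \cref{alg:singlepass} (rate $\eps^{-3}$), $p\cdot Dn/(\eps z)=60\alpha n\log n/\eps^{4}$, not $/\eps^{3}$. The peeling constant $c=\eps$ from \cref{lemma:peel-memory} adds a $1/\eps$ on top of the $\eps^{-3}$ already in $p$. So your bound does not keep $|H|$ below the threshold on line~\ref{line:singlepass:sample-size}. It would if either of the following held:
\begin{enumerate}
\item $p$ were at the \textsc{Sample} rate $\eps^{-2}$. This is exactly the rate behind that threshold, via \cref{lemma:peel-memory} and \cref{lemma:sample-memory} with $c=\eps$.
\item The threshold were of order $\alpha n\log n/\eps^{4}$.
\end{enumerate}
The paper's proof of this lemma treats $H$ as a $p$-sample of all offered edges and never mentions resets. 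So this mismatch lies in the algorithm's parameters rather than in your route. But your completeness argument explicitly relies on there being no reset, so you should state which parameters make that true. The corrected inequality above already closes at rate $\eps^{-2}$.
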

\begin{proof}
If \cref{alg:singlepass} returns a subgraph that is not empty, it must satisfy \condition and therefore by \cref{lemma:sample-condition}, the subgraph has density at least $(1-2\eps)D$ in $G$ with probability at least $1 - n^{-5}$.

Let $D^* = \rho_G(S^*, T^*)$ and $z^* = \sqrt{|S^*|/|T^*|}$. Now, when given
\[(1-2\eps)D^*\leq D \leq (1-\eps)D^*\text{ and } (1-\eps)z^*\leq z \leq z^*,\]
we want to show that there exists a subgraph that satisfies \condition.
We will focus on the subgraph $G^* = (S^*, T^*, E_G(S^*, T^*))$.

If $z \geq 1$, then \cref{alg:singlepass} will remove $k_T - 1$ edges from each vertex in $T^*$.
Given the bounds on $D$ and $z$, we know that $k_T \leq \eps D^* z^*$.
Therefore, the number of edges remaining in the subgraph $(S^*, T^*)$ in $G$ is at least
\[|E_{G}(S^*, T^*)| - \eps D^* z^*|T^*| = (1-\eps)|E_{G}(S^*, T^*)|\]
since $D^* z^* |T^*| = |E_{G}(S^*, T^*)|$.
Because $\alpha = 1/(1-\eps)$ and the bounds on $D$ and $z$, we have that $(S^*, T^*)$ in the sampled graph $H$ will satisfy \condition with probability at least $1 - n^{-6}$ using the Chernoff bound.

If $z < 1$, then \cref{alg:singlepass} will remove $k_S - 1$ edges from each vertex in $S^*$.
Given the bounds on $D$ and $z$, we know that $k_S \leq \frac{\eps D^*}{z^*}$.
Therefore, the number of edges remaining in the subgraph $(S^*, T^*)$ in $G$ is at least
\[|E_{G}(S^*, T^*)| - \frac{\eps D^*}{z^*}|S^*| = (1-\eps)|E_{G}(S^*, T^*)|\]
since $D^*|S^*|/z^* = |E_{G}(S^*, T^*)|$.
Because $\alpha = 1/(1-\eps)$ and the bounds on $D$ and $z$, we have that $(S^*, T^*)$ in the sampled graph $H$ will satisfy \condition with probability at least $1 - n^{-6}$ using the Chernoff bound.

Therefore, there exists a subgraph that satisfies \condition and following \cref{theorem:sample-approx}, the subgraph returned will have density at least $(1-2\eps)D$ in $G$ with probability at least $1 - n^{-5}$.
\end{proof}

Using \cref{lemma:singlepass:approx}, we are now ready to present the full single-pass semi-streaming algorithm in \cref{theorem:singlepass}.

\begin{theorem}
\label{theorem:singlepass}
For $\eps \in (0, 1/4)$, there exists a single-pass semi-streaming algorithm using $\tO(n/\eps^3)$ memory that computes a $(1-4\eps)$-approximate DDS with probability at least $1 - n^{-4}$.
\end{theorem}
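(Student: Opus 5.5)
The plan mirrors the proof of \cref{theorem:twopass}, with \cref{alg:singlepass} in place of \cref{alg:twopass} and \cref{lemma:singlepass:approx} supplying the approximation guarantee. The algorithm $\Astream$ runs \cref{alg:singlepass} in parallel for all $m/n \le D \le n$ and $1/\sqrt{n}\le z\le\sqrt{n}$, ranging over powers of $(1+\eps)$; that is $O(\log^2 n/\eps^2)$ instances. Among the instances that return a nonempty pair, it outputs the one with the largest $D$. Each instance is a single pass over the stream, so all of them run simultaneously in one pass.

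\textbf{Memory.} Each instance keeps the counters $d(\cdot)$, which take $O(n)$ words, and the sample $H$. Line~\ref{line:singlepass:sample-size} caps $|H|$ deterministically at $\frac{60(1+\eps)\alpha n\log n}{\eps^3}=O(n\log n/\eps^3)$, so this bound needs no probabilistic argument. Summing over the $\mathrm{polylog}(n)/\eps^{O(1)}$ instances gives $\tO(n/\eps^3)$ memory in total. I will treat the $1/\eps$ factors from the grid of guesses as absorbed into $\tO$, or handled by guessing $z$ at a coarser granularity, as in the two-pass proof.

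\textbf{Approximation.} Consider any instance. If it returns at line~\ref{line:singlepass:den-check}, the returned pair has density at least $D$ in $G$. Indeed, $\rho(S,T)$ is computed exactly from the degree counters, because peeling removes vertices from one side only. Otherwise, by the first part of \cref{lemma:singlepass:approx}, the output is either empty or has density at least $(1-2\eps)D$ in $G$, with probability $1-n^{-5}$.

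Now take the grid instance with $(1-2\eps)D^*\le D\le(1-\eps)D^*$ and $(1-\eps)z^*\le z\le z^*$. For this instance I will show that it is not harmed by the reset at line~\ref{line:singlepass:sample-size}. This is the main obstacle, because emptying $H$ could discard the edges of $(S^*,T^*)$. The intended argument splits into two cases.

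\emph{Case 1: the final $\rho(S,T)\ge D$.} The instance returns early with density at least $D$, and we are done.

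\emph{Case 2: the final $\rho(S,T)< D$.} By \cref{lemma:peel-memory} with $c=\eps$, the number of edges surviving the final peeling is $O(nD/(\eps\max(z,1/z)))$. Every edge eligible for sampling, meaning its endpoint counter has already reached the threshold, belongs to a vertex that is not removed in the final peeling. So the eligible edges form a subset of the surviving edges. By the Chernoff bound (as in \cref{lemma:sample-memory}), with probability $1-n^{-6}$ at most $(1+\eps)p$ times this number of edges is ever sampled. I still need to check that this quantity is below the cap; this requires matching the constant in \cref{lemma:peel-memory}, namely $|E|<D|V|/(cz)$. If so, the reset never fires. The second part of \cref{lemma:singlepass:approx} then gives a nonempty output with density at least $(1-2\eps)D\ge(1-2\eps)^2D^*\ge(1-4\eps)D^*$.

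Since $\Astream$ picks the largest $D$ with a nonempty output, and every nonempty output has density at least $(1-2\eps)D$, the final answer has density at least $(1-4\eps)D^*$. A union bound over the polylogarithmically many instances, each failing with probability at most $n^{-5}$, gives success probability $1-n^{-4}$.
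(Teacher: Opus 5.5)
Your route is the paper's: the same $(1+\eps)$-grid of parallel copies of \cref{alg:singlepass}, memory from the deterministic cap on $|H|$, soundness from the early-return check and \cref{lemma:sample-condition}, completeness of the good instance from \cref{lemma:singlepass:approx}, and selection of the largest $D$ with a union bound. Your Case 1/Case 2 split spells out the paper's one-line claim that, without an early return, $H$ stays below the cap w.h.p.\ by \cref{lemma:peel-sample-memory}. Your two observations are correct: eligibility of an edge depends only on the stream order, and every eligible edge belongs to a surviving vertex. Case 2 also never uses the good values of $D,z$, so it covers every instance.

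The constant check you deferred is the crux, and it does not close with the parameters as written. For $z\ge 1$, \cref{lemma:peel-memory} with $c=\eps$ bounds the surviving edges by $D|V|/(\eps z)$. But \cref{alg:singlepass} samples with $p=\frac{60\alpha\log n}{\eps^{3}D}z$. So the expected sample size is only bounded by $\frac{60\alpha|V|\log n}{\eps^{4}}$, which exceeds the cap $\frac{60(1+\eps)\alpha n\log n}{\eps^{3}}$ by a factor of order $1/\eps$ (recall $|V|=n/2$). Subtracting the $k_T-1$ discarded edges per surviving vertex only sharpens the count of eligible edges to about $\frac{D|V|}{4\eps z}$. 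This is attained when the surviving $T$-vertices have in-degree about $2k_T$. So for small $\eps$ nothing prevents the reset from firing in the good instance after the edges of $(S^*,T^*)$ have been sampled.

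The cap and the $\tO(n/\eps^3)$ bound are consistent with $p=\min\bigl(1,\frac{60\alpha\log n}{\eps^{2}D}\max(z,1/z)\bigr)$ instead. Then the mean is at most $\frac{60\alpha|V|\log n}{\eps^{3}}\le\frac{60\alpha n\log n}{\eps^{3}}$, and Chernoff keeps the count below the cap w.h.p. The $\eps^{2}$ probability also suffices for \cref{lemma:sample-condition}, which is stated for exactly this $p$. It also suffices for the Chernoff step in \cref{lemma:singlepass:approx}, which needs relative deviation $\eps/2$ on a mean of at least $60\log n/\eps^{2}$. The paper's proof glosses over the same point: the bound $O(\alpha n\log n/(c\eps^2))$ in \cref{lemma:peel-sample-memory} is for the $\eps^2$ probability. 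You should state this correction explicitly rather than leave the comparison open.

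Finally, the grid has $\Theta(\log^2 n/\eps^2)$ instances. Absorbing $1/\eps^2$ into $\tO$ is not legitimate for a bound that tracks $\eps$, and the proof of \cref{theorem:twopass} uses the same $(1+\eps)$-grid, not a coarser one. The paper makes the same slip (it counts $O(\log^2 n)$ instances). As written, the argument therefore gives $O(n\log^3 n/\eps^5)$ words rather than $\tO(n/\eps^3)$.
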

\begin{proof}
Let $\Astream$ be our single-pass semi-streaming algorithm. We now describe it and provide its analysis.

\paragraph{Algorithm description.}
$\Astream$ will run \cref{alg:singlepass} on all $m/n \leq D\leq n$ and $\frac{1}{\sqrt{n}} \leq z \leq \sqrt{n}$ in parallel using powers of $(1+\eps)$.
Then, out of all the outputs of \cref{alg:singlepass} where the vertex sets are both non-empty, the one corresponding to the largest $D$ is the final output of $\Astream$.

\paragraph{Algorithm memory.}
The size of $H$ never goes above $O(n\log n/\eps^3)$ edges due to line $13$ of \cref{alg:singlepass}.
Therefore, over all $O(\log^2 n)$ parallel instances of \cref{alg:singlepass}, $\Astream$ uses $O(n\log^3 n/\eps^3)$ memory in total.

\paragraph{Algorithm approximation.}
For all parallel instances of $D$ and $z$, if \cref{alg:singlepass} returns early on line $16$, a subgraph with density at least $D$ in $G$ is computed.
Otherwise, we observe that $H$ will never go above $\frac{60(1+\eps)\alpha n\log n}{\eps^3}$ with probability at least $1 - n^{-6}$ due to \cref{lemma:peel-sample-memory}.
So, $H$ contains a sample of the edges in $G$.
Additionally, we have the guarantee that the subgraph returned, if it is not empty, has density at least $(1-2\eps)D$ in $G$ with probability at least $1 - n^{-5}$ due to \cref{lemma:sample-condition}.
Now, we know there exists an instance where
\[(1-2\eps)\rho_G(S^*, T^*) \leq D \leq (1-\eps)\rho_G(S^*, T^*) \text{ and } (1-\eps)\sqrt{|S^*|/|T^*|}\leq z \leq \sqrt{|S^*|/|T^*|}.\]
Using \cref{lemma:singlepass:approx}, we have the guarantee that this specific instance of $D$ and $z$ will not return an empty subgraph and will return a subgraph with density at least $(1-2\eps)D$ in $G$ with probability at least $1 - n^{-5}$.
Since $\Astream$ picks the subgraph corresponding to the largest $D$ that satisfies \condition, it finds a $(1-4\eps)$-approximate DDS. This is true with probability at least $1 - n^{-4}$ by taking the union bound over all instances.
\end{proof}

Therefore, we have the first single-pass semi-streaming algorithm for DDS with a constant approximation.
Additionally, it completely closes the gap between undirected and directed DS by attaining a $(1-\eps)$-approximation.

\section{Near-linear MPC algorithm}
\label{section:mpc}
In this section, we show how to simulate \PeelSample directly in the near-linear MPC model.
Peeling can be simulated through computing the degrees of all vertices.
This can be calculated in $O(1)$ rounds using standard MPC primitives \cite{goodrich2011sorting}.
Then, when sampling from the remaining edges of the graph, the sample has $\tilde{O}(n)$ edges, which fits on a single machine.
This enables us to compute the densest subgraph of the sampled graph locally on a single machine.
Doing this for all instances of $D$ and $z$ in parallel requires only an extra $O(\poly\log n)$ factor in memory.
\begin{theorem}
\label{theorem:near-linear-MPC}
For $\eps\in(0, 1/4)$, there exists a MPC algorithm in the near-linear memory regime using $O(1)$ rounds that computes a $(1-4\eps)$-approximate DDS with probability at least $1 - n^{-4}$. The algorithm uses $\tilde{O}(n)$ memory per machine and $\tilde{O}(m)$ total memory.
\end{theorem}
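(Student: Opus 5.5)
We simulate the two-pass algorithm \cref{alg:twopass} (equivalently, \PeelSample with $c=1/2$ and $\alpha = 1/(1-\eps)$) in the near-linear MPC model. We run it for all $O(\log^2 n/\eps^2)$ guesses of $D \in [m/n, n]$ and $z \in [1/\sqrt{n}, \sqrt{n}]$ (powers of $1+\eps$) in parallel, and output the non-empty result with the largest $D$. The approximation analysis is then exactly the one in \cref{theorem:twopass}. That argument uses only \cref{lemma:peel-memory}, \cref{lemma:peel-sample-memory}, \cref{lemma:sample-condition} and \cref{theorem:peel-sample-approx}, and is independent of the computational model. So the work is to show that each step runs in $O(1)$ rounds with $\tO(n)$ memory per machine and $\tO(m)$ total memory.

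Concretely, the rounds proceed as follows.

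\emph{Step 1 (degrees).} Compute all out- and in-degrees using sorting and aggregation primitives \cite{goodrich2011sorting} in $O(1)$ rounds. The degree table has $O(n)$ words and fits on one machine. It is independent of $(D,z)$, so it is computed once.

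\emph{Step 2 (peeling and density check).} For each guess $(D,z)$, a single machine decides which vertices to peel. If $z \ge 1$, it removes vertices $v \in T$ with $d(v) < Dz/2$; otherwise it removes vertices $v \in S$ with $d(v) < D/(2z)$. Peeling is one-sided, so as noted after \cref{alg:twopass}, $|E(S,T)|$ after peeling equals $m$ minus the sum of the removed degrees. The check $\rho(S,T) \ge D$ is therefore local. Broadcasting the surviving-vertex bitmaps for all guesses would cost $\tO(n)$ words per edge-holding machine, which is too much if repeated naively. Instead, each edge $(u,v)$ is joined with the degrees $d(u), d(v)$ in $O(1)$ rounds by sorting. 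Given $d(v)$, whether $v$ survives in instance $(D,z)$ is a function of $d(v)$, $D$ and $z$ alone. Each machine can therefore evaluate survival locally for every guess, at $O(\log^2 n/\eps^2)$ work per edge.

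\emph{Step 3 (sampling).} For each guess not returning early, every surviving edge is independently kept with probability $p$. By \cref{lemma:peel-sample-memory}, with probability at least $1-n^{-6}$ each sample has $O(n\log n/\eps^2)$ edges. Each sample is routed to a designated machine in one round. The total number of sampled edges over all guesses is $\tO(n)$ per machine and $\tO(m)$ overall, since each edge appears in at most $O(\log^2 n/\eps^2)$ samples.

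\emph{Step 4 (local computation).} That machine computes, e.g. by an exact LP, a densest subgraph of $H$ satisfying \condition. A final aggregation round selects the largest $D$ with a non-empty output.

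The main obstacle is bookkeeping in Steps 2–3: ensuring that no machine receives more than $\tO(n)$ words, even when a sample is unexpectedly large. I would handle this as in \cref{alg:singlepass}. Sample sizes are first counted by aggregation in $O(1)$ rounds. Any instance whose count exceeds the bound of \cref{lemma:peel-sample-memory} is discarded; by that lemma this happens only with probability $n^{-6}$ per instance. A union bound over the $O(\log^2 n/\eps^2)$ instances, combined with the $1-n^{-5}$ guarantee of \cref{theorem:peel-sample-approx} for the good guess, gives overall success probability at least $1-n^{-4}$ and approximation $1-4\eps$.
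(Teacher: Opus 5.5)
Your proposal is correct and follows the paper's proof. Like the paper, you simulate \cref{alg:twopass} for every guess of $(D,z)$ in parallel: degrees via sorting in $O(1)$ rounds, one-sided peeling with a locally checkable density test, sampling with probability $p$, gathering each $\tO(n)$-size sample on one machine, and returning the largest $D$, with the approximation argument taken verbatim from \cref{theorem:twopass}. Your extra bookkeeping (joining edges with the degree table instead of replicating edges per guess, and pre-counting sample sizes to discard oversized instances) only makes the per-machine memory bound deterministic rather than holding with high probability, and leaves the argument unchanged.
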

\begin{proof}
Let $\AMPC$ be our near-linear MPC algorithm. We now describe it and provide its analysis.

\paragraph{Algorithm description.}
$\AMPC$ runs parallel instances on all $m/n \leq D\leq n$ and $\frac{1}{\sqrt{n}} \leq z \leq \sqrt{n}$ in parallel using powers of $(1+\eps)$. For a specific instance given $D$ and $z$, we do the following:
\begin{enumerate}
    \item First, we compute the degree of all vertices. If $z \geq 1$, we remove all vertices in $T$ with degree less than $k_T = Dz/2$. Otherwise, we remove all vertices in $S$ with degree less than $k_S = D/(2z)$.
    \item If the remaining induced subgraph has density at least $D$, we return it.
    \item We then uniformly sample the edges of the remaining induced subgraph using the probability $p = \min\rb{1, \frac{60\alpha\log n}{\eps^2 D}\max\rb{z, 1/z}}$ where $\alpha = 1/(1-\eps)$. Let us call the sampled subgraph $H$.
    \item Finally, we move all the edges of $H$ to a single machine and find the densest subgraph of $H$ that satisfies \condition, returning the subgraph or that no such subgraph exists.
\end{enumerate}
Then, out of all the nonempty vertex sets returned from each instance of $D$ and $z$, the one corresponding to the largest $D$ is the final output of $\AMPC$.

\paragraph{Algorithm memory and round complexity.}
The total memory is $\tilde{O}(m)$ since we make a copy of each edge for each instance of $D$ and $z$ so that they can be ran in parallel.
From \cref{lemma:peel-sample-memory}, the sample $H$ has size $O(n\log n)$ with probability at least $1 - n^{-6}$ which fits on a single machine with $\tO(n)$ memory.
As for the round complexity, calculating the degrees of all vertices uses $O(1)$ rounds.
Sampling and moving the edges to a single machine also takes $O(1)$ rounds.
Therefore, we use $O(1)$ rounds in total.

\paragraph{Algorithm approximation.}
$\AMPC$ directly simulates \cref{alg:twopass} for each instance of $D$ and $z$.
Therefore, following the argument in \cref{theorem:twopass}, $\AMPC$ finds a $(1-4\eps)$-approximate DDS with probability at least $1 - n^{-4}$.

\end{proof}

\section{Sublinear-time algorithm}
\label{section:sublineartime}
For our sublinear-time algorithm, the main idea is to once again simulate our \PeelSample algorithm using the oracle queries for given $D$ and $z$.
However, there are two small implementation details that we need to be careful of.

First, how does one attain a uniform sample of the edges remaining after peeling?
Generally when vertices are removed from a graph, it can be difficult with only the incident-list oracle to determine which edges are removed.
However, we observe that since our peeling is one-sided, removing vertices in $S$ or $T$ and never both, we can sample edges based on the vertex set that peeling is performed on.
For example, if $z \geq 1$ and peeling is performed on vertex set $T$, then we can sample based off the in-neighbors of each vertex in $T$ that is not removed during peeling.
On the other hand, if $z < 1$ and peeling is performed on vertex set $S$, then we can sample based off the out-neighbors of each vertex in $S$ that is not removed during peeling.

Second, though we may be able to use only $\tO(n)$ queries and $\tO(n)$ memory through simulating \PeelSample, we also need to make sure the algorithm can compute an approximate DDS in sublinear time as well.
Unlike the previous models with large local computation that allows them to compute the exact densest subgraph that satisfies \condition, we settle for an approximation.
Specifically, we use the LP algorithm from \cite{bahmani2014efficient} that returns a subgraph that satisfies \condition and takes near-linear time in the size of the input graph.

\subsection{LP algorithm satisfies \condition}
\label{section:algos-condition}
We briefly present the LP algorithm from \cite{bahmani2014efficient}.
In their work, they focus on solving the feasibility of the following dual LP for the DDS of graph $G$, $\Dual_G(D, z)$:
\begin{alignat*}{2}
\alpha_{eu} + \alpha_{ev} & \geq 1 \quad & \forall e &=\{u, v\} \in E_G\\
\sum_{(v, w)\in E_G} \alpha_{ev} & \leq D/(2z) \quad & \forall v &\in S\\
\sum_{(u, v)\in E_G} \alpha_{ev} & \leq Dz/2 \quad & \forall v &\in T\\
\alpha_{ev} &\leq 2\quad & \forall e &, v\\
\alpha_{ev} & \geq 0 \quad & \forall e &, v \\
\end{alignat*}
They have an additional constraint $\alpha_{ev} \leq 2$ which reduces the width of the LP to a constant.
One can observe that the smallest $D$ where $\Dual_G(D, z)$ is feasible is equal to the density of the DDS.
To solve the LP efficiently, they use the multiplicative weights update (MWU) method to compute an approximate feasible solution.
Specifically, they hope to decide the feasibility of the first line of constraints, $\alpha_{eu} + \alpha_{ev} \geq 1$, subject to the last four constraints, which they call $P_G(D,z)$ because they form polyhedral constraints.
Then, the MWU method iteratively assigns a vector of weights on the set of constraints $\alpha_{eu} + \alpha_{ev} \geq 1$ and approximately satisfies it.
In the end, the MWU method returns an assignment $\alpha$ that satisfies the constraints of $P_G(D,z)$ and also $\alpha_{eu} + \alpha_{ev}\geq 1-\eps$ for all $e=\{u, v\}\in E_G$.
It also returns a final vector of weights $y$, and they define the function
\[C_G(y, D, z) = \max_{\alpha \in P_G(D,z)} \sum_{v\in V}\sum_{e\text{ incident on } v} y_e\alpha_{ev}.\]
Finally, using a rounding scheme on the resulting weights $y$ and leveraging the function $C_G(y, D, z)$, they are able to recover a $(1-\eps)$-approximate DDS.
Since we want to solve the LP on the sampled graph $H$ returned by \Sample, we demonstrate the connection in feasibility between $\Dual_G$ and $\Dual_H$ in the following lemma.

\begin{lemma}
\label{lemma:lp-sample-feasibility}
Consider graph $G = (S, T, E)$, $\eps\in(0,1/2)$, $\alpha \geq 1$, $D \geq m/n$, and $z\in[1/\sqrt{n}, \sqrt{n}]$.
Let $H = \Sample(G, \eps, \alpha, D, z)$ be the sampled graph from $G$ with probability $p = \min\rb{1, \frac{60\alpha\log n}{\eps^2 D}\max\rb{z, 1/z}}$.
Then, $\Dual_H((1+\eps)pD^*, z^*)$ is feasible with probability at least $1 - 2n^{-4}$ where $D^* = \rho_G(S^*, T^*)$ and $z^* = \sqrt{|S^*|/|T^*|}$.
\end{lemma}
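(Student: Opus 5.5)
The plan is to take a feasible solution of the dual LP on $G$ at the true parameters and show that its restriction to the sampled edges is feasible for the dual LP on $H$ at density $(1+\eps)pD^*$. By Charikar's LP duality (the remark preceding the lemma that the smallest feasible $D$ equals the DDS density), $\Dual_G(D^*, z^*)$ has a feasible solution $\{\alpha_{ev}\}$. This remains true with the extra width constraint $\alpha_{ev}\le 2$, since we may cap values at $1$ without violating any constraint.

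Define a candidate solution for $\Dual_H((1+\eps)pD^*, z^*)$ by keeping $\alpha_{ev}$ for every edge $e\in E_H$ and discarding the variables of unsampled edges.
\begin{itemize}
\item The covering constraints $\alpha_{eu}+\alpha_{ev}\ge 1$ hold for every $e\in E_H\subseteq E_G$.
\item The box constraints $0\le\alpha_{ev}\le 2$ are inherited.
\end{itemize}
So only the vertex load constraints need checking. If $p=1$ then $H=G$ and feasibility is immediate, since the right-hand sides only grow. Otherwise, fix $v\in S$ and let $L_v=\sum_{e\ni v,\, e\in E_H}\alpha_{ev}$. This is a sum of independent terms $\alpha_{ev}\mathbf{1}[e\in E_H]\in[0,2]$, with $\E{L_v}=p\sum_{e\ni v}\alpha_{ev}\le pD^*/(2z^*)$.

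Apply \cref{lemma:chernoff}(A) to $L_v/2$, with terms in $[0,1]$. If the mean is smaller than the bound, first pad it with dummy variables up to $\mu=pD^*/(4z^*)$, which only makes the upper tail easier. This gives
\[
\prob{L_v\ge (1+\eps)pD^*/(2z^*)}\le 2\exp\rb{-\eps^2 pD^*/(12 z^*)}.
\]
Symmetrically, for $v\in T$ the bound is $2\exp(-\eps^2 pD^* z^*/12)$. Plugging in $p=\frac{60\alpha\log n}{\eps^2 D}\max(z,1/z)$, the exponent becomes $5\alpha\log n\cdot (D^*/D)\cdot \max(z,1/z)/z^*$ for $S$ and $5\alpha \log n\cdot (D^*/D)\cdot\max(z,1/z)\,z^*$ for $T$. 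Both are at least $5\log n$, up to constants, as soon as $\max(z,1/z)\ge \max(z^*,1/z^*)$ or more generally $z$ is within a constant factor of $z^*$, and $D\le D^*$. With an exponent of at least $5\log n$, each vertex fails with probability at most $2n^{-5}$. A union bound over the at most $n$ vertices of $S$ and of $T$ gives total failure probability at most $2n^{-4}$, as claimed.

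The main obstacle is this last calculation. The statement fixes $p$ through generic parameters $D, z$, but the Chernoff exponent needs $pD^*\max(1/z^*,z^*)\gtrsim \log n/\eps^2$. I would therefore make explicit the implicit hypothesis of the surrounding discussion: $D\le D^*$ and $z$ equals, or approximates within a $(1-\eps)$ factor, $z^*$, as in the parameter guesses used in \cref{theorem:twopass}. Under it, $\max(z,1/z)/z^*\ge(1-\eps)\ge 1/2$ and $\max(z,1/z)\,z^*\ge 1$, which suffices after adjusting constants. If the constants come out slightly short, I would absorb the loss into the factor $60$ in $p$.
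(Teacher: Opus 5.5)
Your proposal is correct and follows the paper's proof: restrict a feasible solution of the dual LP for $G$ at $(D^*,z^*)$ to $E_H$, apply a Chernoff bound to each vertex load using $\alpha_{ev}\le 2$, and take a union bound over the vertices. The hypothesis you make explicit ($D\le D^*$ and $z$ close to $z^*$) is one the paper uses silently when it bounds the mean by $pD/(2z)$. In the regime where the lemma is applied ($D\le(1-\eps)D^*$ and $(1-\eps)z^*\le z\le z^*$), both exponents are at least $5\alpha\log n$, because $(D^*/D)(z/z^*)\ge 1$ and $z^*/z\ge 1$, so the constant $60$ needs no adjustment.
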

\begin{proof}
We know that $\Dual_G(D^*, z^*)$ is feasible.
Let $\alpha_{ev}$ be the solution that makes $\Dual_G(D^*, z^*)$ feasible.
We now show that $\alpha_{ev}$, for $e \in E_H$, makes $\Dual_H((1+\eps)pD^*, z^*)$ feasible with high probability.
For every $v\in S$,
\[\E{\sum_{(v, w)\in E_H} \alpha_{ev}} = p\sum_{(v, w)\in E_G} \alpha_{ev} \leq \frac{pD}{2z}.\]
Therefore,
\[\prob{\rb{\sum_{(v, w)\in E_H} \alpha_{ev}} - \frac{pD}{2z} \geq \frac{\eps pD}{2z}} \leq \exp\rb{-\eps^2 pD/(12z)} \leq n^{-5}\]
using the Chernoff bound and $\alpha_{ev} \leq 2$.
Similarly, for every $v\in T$,
\[\E{\sum_{(u, v)\in E_H} \alpha_{ev}} = p\sum_{(u, v)\in E_G} \alpha_{ev} \leq \frac{pDz}{2}.\]
Therefore,
\[\prob{\rb{\sum_{(u, v)\in E_H} \alpha_{ev}} - \frac{pDz}{2} \geq \frac{\eps pDz}{2}} \leq \exp\rb{-\eps^2 pDz/12} \leq n^{-5}\]
using the Chernoff bound  and $\alpha_{ev} \leq 2$.
Then, taking the union bound over all $v\in S\cup T$, we see that $\Dual_H((1+\eps)pD^*, z^*)$ is feasible with probability at least $1 - 2n^{-4}$.
\end{proof}

Now, we explicitly write the algorithm from \cite{bahmani2014efficient} to better parametrize $D$ and $z$.
We present the algorithm in \cref{alg:LP}.

\begin{algorithm}
\caption{LP algorithm for $(1-\eps)$-approximate DDS}
\label{alg:LP}
\textbf{Input:} bipartite graph $G = (S, T, E)$, parameters $\eps$, $D$ and $z$
\begin{algorithmic}[1]
\State $(\alpha, y)\gets $ run MWU method on $\Dual_G(D, z)$
\If{$\Dual_G(D, z)$ is infeasible or $\sum_e y_e < (1-6\eps)C_G(y, D, z)$}
\State \Return $(\emptyset, \emptyset)$, no subgraph exists
\EndIf
\State $(S', T')\gets$ approximate DS from applying rounding scheme to $G$, $y$, $D$, $z$
\State \Return $(S', T')$
\end{algorithmic}
\end{algorithm}

Then, we have the following properties of \cref{alg:LP}.

\begin{theorem}[\cite{bahmani2014efficient}, Theorem $7$ modified]
\label{theorem:lp-approx}
Consider graph $G = (S, T, E)$, $\eps\in(0,1/12)$, $D \geq m/n$, and $z\in[1/\sqrt{n}, \sqrt{n}]$. \cref{alg:LP} takes $O\rb{m\frac{\log^2 m}{\eps^3}}$ time and if it returns a nonempty subgraph $(S', T')$, it satisfies
\[|E_G(S', T')| \geq (1-9\eps)\frac{D}{2}\rb{|S'|/z + |T'|z}.\]
\end{theorem}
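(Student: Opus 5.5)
The plan is to adapt the analysis of Theorem~7 in \cite{bahmani2014efficient} to the parametrized dual $\Dual_G(D,z)$ and track how $D$ and $z$ enter the rounding guarantee. The argument has two parts: a running-time bound and a guarantee on the returned subgraph.

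\textbf{Running time.} We use the standard MWU analysis for covering constraints over a polytope of constant width. The constraint $\alpha_{ev}\le 2$ makes the width $O(1)$, so $O(\log m/\eps^2)$ iterations suffice. Each iteration computes the best response $\arg\max_{\alpha\in P_G(D,z)}\sum y_e\alpha_{ev}$. This decomposes per vertex: each vertex $v$ greedily assigns value $2$ to its incident edges in decreasing order of $y_e$ until its budget ($D/(2z)$ for $v\in S$, $Dz/2$ for $v\in T$) is exhausted. Doing this takes $O(m\log m)$ time with sorting, giving $O(m\log^2 m/\eps^3)$ overall; the extra $1/\eps$ comes from precision in the update rule. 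The time bound is essentially unchanged from the original, since the parameters only change the per-vertex budgets.

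\textbf{Returned subgraph.} Suppose the algorithm does not return empty, so $\sum_e y_e\ge(1-6\eps)C_G(y,D,z)$. Normalize $y$, then perform threshold rounding: for $r>0$ let $E_r=\{e:y_e\ge r\}$, and let $S_r,T_r$ be the endpoints of $E_r$ in $S$ and $T$.

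Next, lower bound $C_G(y,D,z)$ by a feasible point of $P_G(D,z)$ built greedily. Each $v\in S$ puts weight $2$ on its top $D/(4z)$ heaviest edges (fractionally), and similarly for $T$ with $Dz/4$. Integrating over thresholds $r$ bounds this by
\[
C_G(y,D,z)\ \gtrsim\ \int_0^\infty \Big(\min(|E_r\cap\delta(S)|\text{ stuff})\Big)\,dr,
\]
or, more simply, by comparing against the choice that, for each $r$, gives every vertex of $S_r$ value $\tfrac{D}{2z}$ and every vertex of $T_r$ value $\tfrac{Dz}{2}$, spread over edges of $E_r$. Meanwhile $\sum_e y_e=\int_0^\infty|E_r|\,dr$.

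The key inequality to establish is
\[
C_G(y,D,z)\ \ge\ \int_0^\infty \min\!\Big(|E_r|,\ \tfrac{D}{2}\big(|S_r|/z+|T_r|z\big)\Big)\,dr
\]
up to a constant factor. Combined with $\int|E_r|\,dr\ge(1-6\eps)C_G$, an averaging argument then yields some $r$ with
\[
|E_r|\ \ge\ (1-9\eps)\tfrac{D}{2}\big(|S_r|/z+|T_r|z\big).
\]
Since $E_r\subseteq E_G(S_r,T_r)$, the pair $(S',T')=(S_r,T_r)$ satisfies the claim. The rounding step computes all $r$ by sorting $y$ and scanning prefixes, which fits within the time bound.

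\textbf{Main obstacle.} The difficult step is the inequality relating $C_G$ to the level sets, together with keeping the exact constant $\tfrac12$ rather than $\tfrac14$. I would redo the proof of Bahmani et al.'s rounding lemma with the two asymmetric budgets $D/(2z)$ and $Dz/2$, which replace their single budget. The $\alpha_{ev}\le 2$ cap permits vertex $v$ to fully cover $\lfloor\text{budget}/2\rfloor$ edges; this must be matched with the factor $\tfrac12$ in the target, and the rounding losses absorbed into the $3\eps$ slack between $6\eps$ and $9\eps$.
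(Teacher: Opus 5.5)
The paper does not prove this statement; it defers to \cite{bahmani2014efficient}. Your running-time part is fine: $O(\log m/\eps^2)$ iterations times the $O(m\log m)$ greedy oracle already gives $O(m\log^2 m/\eps^2)$, so the remark about an ``extra $1/\eps$'' is unnecessary. The genuine gap is in the rounding.

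\textbf{Why your rounding fails.} Your key inequality is vacuous. Write $Q_r=\frac{D}{2}(|S_r|/z+|T_r|z)$. Combining $C_G\ge\int_0^\infty\min(|E_r|,Q_r)\,dr$ with $\int_0^\infty|E_r|\,dr\ge(1-6\eps)C_G$ gives nothing, because $\min(|E_r|,Q_r)\le|E_r|$ pointwise. A constant-factor loss would also be incompatible with a $(1-9\eps)$ guarantee. The averaging step actually needs $C_G\ge(1-O(\eps))\int_0^\infty Q_r\,dr$, and this is false for level sets of $y$.

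Your greedy description computes $C_G$ exactly: $C_G=\int_0^\infty\sum_v\min(2d_r(v),b_v)\,dr$. Here $d_r(v)$ is the number of edges at $v$ with $y_e\ge r$, and $b_v=D/(2z)$ on $S$, $b_v=Dz/2$ on $T$. So a low-degree endpoint in $S_r$ contributes only $2d_r(v)$ to $C_G$ but a full $D/(2z)$ to $Q_r$.

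Concretely, take $z=1$, $D=a$ large, and $G=K_{a,a}$ plus $L=\lfloor 6\eps a^2\rfloor$ pendant $S$-vertices, each attached to a core $T$-vertex. Let $y\equiv y_0$. Then
\[
\sum_e y_e=y_0(a^2+L)\ge(1-6\eps)\,y_0(a^2+2L)=(1-6\eps)C_G ,
\]
so the hypothesis holds. But the only level set is $E_r=E$, with $S_r$ containing every pendant vertex. There $Q_r=a^2+aL/2$ exceeds $|E_r|=a^2+L$ by a factor $\Theta(\eps a)$.

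\textbf{The missing idea.} You should threshold \emph{vertex potentials} from the LP dual of each vertex's knapsack, not the edge weights. Strong duality gives
\[
C_G(y,D,z)=\sum_{v}\min_{x_v\ge 0}\Big(b_vx_v+2\sum_{e\ni v}(y_e-x_v)^+\Big)=B+2R .
\]
At the minimizers, $x_v$ is the weight at which $v$'s greedy exhausts its budget, with $x_v=0$ when $2d(v)<b_v$. Here $B=\sum_v b_vx_v$ and $R=\sum_{e=(u,v)}\big((y_e-x_u)^++(y_e-x_v)^+\big)$.

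Set $y'_e=\min(y_e,x_u,x_v)$, so $\sum_e y'_e\ge\sum_e y_e-R$. Let $\delta=6\eps<1/2$. From $\sum_e y_e\ge(1-\delta)(B+2R)$, and hence also $\sum_e y_e\ge(1-\delta)B$, you get
\[
\sum_e y'_e\ge\frac{1-2\delta}{2(1-\delta)}\sum_e y_e+\frac{B}{2}\ge(1-\delta)B .
\]
This is exactly where the cap $\alpha_{ev}\le 2$ is used: the coefficient $2$ on the overflow terms pays for truncating $y$ to $y'$. It is also where $\eps<1/12$ enters.

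Now set $S_r=\{v\in S:x_v\ge r\}$ and $T_r=\{v\in T:x_v\ge r\}$. Then
\[
\int_0^\infty|E_G(S_r,T_r)|\,dr=\sum_{e=(u,v)}\min(x_u,x_v)\ge\sum_e y'_e ,
\qquad
\int_0^\infty\Big(\frac{D}{2z}|S_r|+\frac{Dz}{2}|T_r|\Big)dr=B .
\]
So one of the $O(n)$ breakpoints gives $|E_G(S_r,T_r)|\ge(1-6\eps)\frac{D}{2}(|S_r|/z+|T_r|z)$, which is stronger than claimed. Vertices of degree below half their budget get $x_v=0$ and are never included, which is precisely what breaks in your version.
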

The proof of this theorem can be found in \cite{bahmani2014efficient}.
It is not hard to see that $D$ and $z$ can be approximated as long as $\Dual_G(D, z)$ remains feasible and the condition $\sum_e y_e < (1-6\eps)C_G(y, D, z)$ remains true after applying the MWU method.
Also, when \cref{alg:LP} outputs a nonempty subgraph, we observe that it satisfies \condition with $\alpha \geq 1/(1-9\eps)$ from \cref{theorem:lp-approx}.

\subsection{Full sublinear-time algorithm}

Using \cref{alg:LP}, we now have all the tools to make sure our algorithm takes sublinear-time. We present the full sublinear-time algorithm in \cref{theorem:sublineartime}.
\begin{theorem}
\label{theorem:sublineartime}
For $\eps\in(0, 1/12)$, there exists a sublinear-time algorithm that takes $\tilde{O}(n)$ time, $\tilde{O}(n)$ space, and uses $\tilde{O}(n)$ queries to an incident-list oracle model to compute a $(1-12\eps)$-approximate DDS with probability at least $1 - n^{-3}$.
\end{theorem}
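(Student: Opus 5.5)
We simulate \PeelSample with oracle queries, once for every pair $(D, z)$ on a $(1+\eps)$-geometric grid with $m/n \le D \le n$ and $1/\sqrt{n}\le z\le\sqrt{n}$. Among the instances that return a nonempty subgraph, we output the one with the largest $D$, exactly as in the proof of \cref{theorem:twopass}. There are $O(\log^2 n/\eps^2)$ instances, so it suffices to show that each instance uses $\tO(n)$ queries, time and space and is correct with probability $1-n^{-4}$. A union bound then gives the claimed $1-n^{-3}$.

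Fix $(D,z)$ and assume without loss of generality that $z\ge 1$; the case $z<1$ is symmetric with out-degrees on $S$.

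\emph{Peeling.} We make one in-degree query per vertex $v\in T$ and keep $T_1=\{v: d(v)\ge k_T\}$, where $k_T=Dz/2$ (that is, $c=1/2$). This costs $O(n)$ queries. Because peeling is one-sided and $S=V$, we have $|E(S,T_1)|=\sum_{v\in T_1} d(v)$ exactly. Hence the density check of line~\ref{line:peel-sample:den-check} is computed without further queries. If it succeeds, we return $(S,T_1)$, which has density at least $D$.

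\emph{Sampling.} Otherwise, \cref{lemma:peel-memory} gives $M:=\sum_{v\in T_1}d(v)=O(nD/z)$. We need each remaining edge independently with probability $p$, without scanning all $M$ edges. For each $v\in T_1$, we draw the sampled indices among $1,\dots,d(v)$ by geometric skipping, or equivalently draw $\mathrm{Bin}(d(v),p)$ and then uniform distinct indices. We query only those in-neighbors. The expected number of queries and the time are $O(n+pM)=O(n\log n/\eps^2)$, and by \cref{lemma:peel-sample-memory} the bound also holds with high probability. This step is the one needing care: the oracle only reveals edges through neighbor queries, so we need one-sidedness both to know which edges survive and to sample them in time proportional to the sample size rather than to $M$.

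\emph{Recovery.} On $H$ we run \cref{alg:LP} with parameters $(pD,z)$, taking time $O(|E_H|\log^2 n/\eps^3)=\tO(n)$ by \cref{theorem:lp-approx}.

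\emph{Soundness.} If \cref{alg:LP} returns a nonempty $(S',T')$, then \cref{theorem:lp-approx} gives $|E_H(S',T')|\ge (1-9\eps)\frac{pD}{2}(|S'|/z+|T'|z)$. This is \condition with $\alpha=1/(1-9\eps)$, so \cref{lemma:sample-condition} gives $\rho_G(S',T')\ge(1-\eps)(1-9\eps)D$ with high probability.

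\emph{Completeness.} Take the grid instance with $D\approx(1-2\eps)D^*$ and $z\approx z^*$, as in \cref{theorem:twopass}. The peeling removes no vertex of $T^*$ by \cref{lemma:existence}, so $E_G(S^*,T^*)$ survives intact. \cref{lemma:lp-sample-feasibility}, extended to approximate $z$, then shows that the dual $\Dual_H$ remains feasible at the slightly inflated density parameter. Since $D$ was chosen a $(1-2\eps)$ factor below $D^*$, this is absorbed by running \cref{alg:LP} at $(1+\eps)pD$ and rescaling, so \cref{alg:LP} does not report infeasibility. The modified \cite{bahmani2014efficient} guarantee then ensures a nonempty output.

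Combining the soundness bound at this instance with the largest-$D$ selection, and accounting for the $(1+\eps)$ grid losses, gives density at least $(1-12\eps)D^*$. The remaining bookkeeping is routine: total queries, time and space are $O(\log^2 n/\eps^2)\cdot\tO(n)=\tO(n)$ for constant $\eps$.
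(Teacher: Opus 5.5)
Your grid over $(D,z)$, one-sided peeling with the exact density check from degrees, sampling of surviving in-neighbors, and soundness argument all match the paper. Your binomial/geometric-skip sampling is a welcome detail the paper leaves implicit. The gap is completeness for \cref{alg:LP}: the density parameter you feed it is too small.

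\cref{lemma:lp-sample-feasibility} gives feasibility of $\Dual_H$ at $(1+\eps)pD^*$. Raising the density parameter only relaxes the degree constraints, so this transfers only to parameters \emph{at least} $(1+\eps)pD^*$. Your parameter is $pD$ in the Recovery step or $(1+\eps)pD$ in the Completeness step. Either way it is at most $(1+\eps)(1-\eps)pD^*$ because $D\le(1-\eps)D^*$, so the lemma gives nothing.

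Worse, the dual is genuinely infeasible there. Sum $\alpha_{eu}+\alpha_{ev}\ge 1$ over $E_H(S^*,T^*)$ and apply the degree constraints: every feasible parameter $D'$ satisfies $D'\ge 2|E_H(S^*,T^*)|/(|S^*|/z+|T^*|z)$. Peeling keeps $E_G(S^*,T^*)$ intact, so the Chernoff lower tail applies. Together with $2zz^*/(z^2+z^{*2})\ge 1-\eps^2$ for $z\in[(1-\eps)z^*,z^*]$, this lower bound is at least $(1-\eps)(1-\eps^2)pD^*$ whp. That exceeds $(1+\eps)pD$ whenever $D<(1-\eps)^2D^*$, in particular at the instance $D\approx(1-2\eps)D^*$ you chose. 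So \cref{alg:LP} returns $(\emptyset,\emptyset)$ exactly where completeness is supposed to be witnessed. Choosing $D$ below $D^*$ forces you to inflate the parameter \emph{more}; it does not absorb the inflation.

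The parameter must lie in a window.
\begin{itemize}
\item It must be large enough. The paper runs \cref{alg:LP} at $\frac{1+\eps}{1-3\eps}pD$. Since $D\ge(1-2\eps)D^*$, this is at least $\frac{1+\eps}{1-\eps}pD^*$, which suffices for feasibility even after paying for $z\neq z^*$.
\item It must also be small enough that the second exit test $\sum_e y_e<(1-6\eps)C_H$ does not fire. The paper gets this from \cref{lemma:sample-DS}: $D_H^*\ge\rho_H(S^*,T^*)\ge(1-\eps)pD$, so the parameter is at most $\frac{1+\eps}{1-4\eps}D_H^*$.
\end{itemize}
Your appeal to ``the modified guarantee'' for a nonempty output skips this second condition. \cref{theorem:lp-approx} only constrains the output when it is nonempty; it does not promise one.

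With the inflated parameter, your soundness argument still works. The edge bound from \cref{theorem:lp-approx} only gets stronger, so \condition still holds with $\alpha=1/(1-9\eps)$, and your $(1-12\eps)$ accounting goes through.
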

\begin{proof}
Let $\Asub$ be our sublinear-time algorithm. We now describe it and provide its analysis.
\paragraph{Algorithm description.}
$\Asub$ runs instances on all $m/n \leq D\leq n$ and $\frac{1}{\sqrt{n}} \leq z \leq \sqrt{n}$ in parallel using powers of $(1+\eps)$. For a specific instance given $D$ and $z$, we do the following:
\begin{enumerate}
    \item First, we query the degrees of all vertices. If $z \geq 1$, we remove all vertices in $T$ with degree less than $Dz/2$. Otherwise, we remove all vertices in $S$ with degree less than $D/(2z)$.
    \item If the remaining induced subgraph has density at least $D$, we return it.
    \item Let $p = \min\rb{1, \frac{60\alpha\log n}{\eps^2 D}\max\rb{z, 1/z}}$ where $\alpha = 1/(1-9\eps)$ be our sampling probability.
    If $z \geq 1$, we sample edges by querying the in-neighbors of each $v\in T$ with probability $p$.
    Otherwise, we sample edges by querying the out-neighbors of each $v\in S$ with probability $p$.
    Let $H$ be the resulting sampled subgraph.
    \item Finally, we return the vertex sets computed by running \cref{alg:LP} on $H$ with $\eps$, $\frac{1+\eps}{1-3\eps}pD$, and $z$ as parameters.
\end{enumerate}
Then, out of all the nonempty vertex sets returned from each instance of $D$ and $z$, the one corresponding to the largest $D$ is the final output of $\Asub$.

\paragraph{Algorithm memory and query complexity.}
For a specific instance of $D$ and $z$, we know by \cref{lemma:peel-sample-memory} that the amount of memory stored is $O(n\log n)$ with probability at least $1 - n^{-6}$.
Similarly, the query complexity is $O(n\log n)$ where we queried degrees of vertices and the edges we wanted to sample.
Therefore, over all instances, we will use $O(n\log^3 n)$ memory and $O(n\log^3 n)$ queries.

\paragraph{Algorithm time complexity.}
For a specific instance of $D$ and $z$, we are running \cref{alg:LP} on a sampled graph with $O(n\log n)$ edges.
By \cref{theorem:lp-approx}, this will take $O(n\log^3 n)$ time.
Therefore, over all instances, we will take $O(n\log^5 n)$ time.

\paragraph{Algorithm approximation.}
For all instances of $D$ and $z$, if the algorithm returns early, a subgraph with density at least $D$ in $G$ is computed.
Otherwise, if running \cref{alg:LP} on the sampled subgraph $H$ outputs a nonempty subgraph, it satisfies \condition with $\alpha = 1/(1-9\eps)$ and the subgraph has density at least $(1-10\eps)D$ in $G$ with probability at least $1 - n^{-5}$ due to \cref{lemma:sample-condition}.
Now, we know there exists an instance where
\[(1-2\eps)\rho_G(S^*, T^*) \leq D \leq (1-\eps)\rho_G(S^*, T^*) \text{ and } (1-\eps)\sqrt{|S^*|/|T^*|}\leq z \leq \sqrt{|S^*|/|T^*|}.\]
With these parameters, we know that none of the edges in the densest subgraph $(S^*, T^*)$ will be removed during peeling, following the argument in \cref{theorem:peel-sample-approx}.
Let us consider running \cref{alg:LP} on $H$ with $\eps$, $\frac{1+\eps}{1-3\eps}pD$, and $z$ as parameters.
From \cref{lemma:lp-sample-feasibility}, we know that $\Dual_H\rb{\frac{1+\eps}{1-3\eps}pD, z}$ is feasible with probability at least $1 - 2n^{-4}$.
Additionally, from \cref{lemma:sample-DS}, $\rho_H(S^*, T^*) \geq (1-\eps)pD$ with probability at least $1 - n^{-6}$.
Therefore, if we let $D_H^*$ be the density of a DDS of $H$, we see that our density parameter of $\frac{1+\eps}{1-3\eps}pD$ is upper bounded by
\[\frac{1+\eps}{1-3\eps}pD \leq \frac{1+\eps}{1-4\eps}D_H^*\]
which implies that
\[\sum_e y_e \geq \frac{1-4\eps}{1+\eps}\cdot(1-\eps) C_H\rb{y, \frac{1+\eps}{1-3\eps}pD, z} \geq (1-6\eps)C_H\rb{y, \frac{1+\eps}{1-3\eps}pD, z}.\]
So, \cref{alg:LP} must return a nonempty subgraph and it will have density in $G$ at least
\[(1-10\eps)\cdot \frac{1+\eps}{1-3\eps} \cdot (1-2\eps)\rho_G(S^*, T^*) \geq (1-12\eps)\rho_G(S^*, T^*).\]
from \cref{theorem:lp-approx} and \cref{theorem:sample-approx} with probability at least $1 - n^{-5}$.
Overall, $\Asub$ computes our desired $(1-12\eps)$-approximation with probability at least $1 - n^{-3}$.

\end{proof}

\section*{Acknowledgement}
We disclose that we used ChatGPT-5.5 to analyze tradeoffs with the parameter $c$ in \cref{theorem:peel-sample-approx}, and its effect on the approximation guarantee and memory requirement in the single-pass semi-streaming algorithm.
This helped us observe that a $(1-\eps)$-approximate DDS single-pass semi-streaming algorithm would incur an extra factor of $\eps$ in the memory.

\bibliographystyle{alpha}
\bibliography{ref.bib}

@inproceedings{esfandiari2016brief,
  title={Brief announcement: Applications of uniform sampling: Densest subgraph and beyond},
  author={Esfandiari, Hossein and Hajiaghayi, MohammadTaghi and Woodruff, David P},
  booktitle={Proceedings of the 28th ACM Symposium on Parallelism in Algorithms and Architectures},
  pages={397--399},
  year={2016}
}

@InProceedings{pmlr-v235-mitrovic24a,
  title = 	 {Faster Streaming and Scalable Algorithms for Finding Directed Dense Subgraphs in Large Graphs},
  author =       {Mitrovic, Slobodan and Pan, Theodore},
  booktitle = 	 {Proceedings of the 41st International Conference on Machine Learning},
  pages = 	 {35876--35891},
  year = 	 {2024},
  volume = 	 {235},
  series = 	 {Proceedings of Machine Learning Research},
  publisher =    {PMLR}
}

@inproceedings{ghaffari2025density,
  title={Density-Dependent Graph Orientation and Coloring in Scalable MPC},
  author={Ghaffari, Mohsen and Grunau, Christoph},
  booktitle={Proceedings of the ACM Symposium on Principles of Distributed Computing},
  pages={349--359},
  year={2025}
}

@inproceedings{bhattacharya2015space,
  title={Space-and time-efficient algorithm for maintaining dense subgraphs on one-pass dynamic streams},
  author={Bhattacharya, Sayan and Henzinger, Monika and Nanongkai, Danupon and Tsourakakis, Charalampos},
  booktitle={Proceedings of the forty-seventh annual ACM symposium on Theory of computing},
  pages={173--182},
  year={2015}
}

@inproceedings{
mitrovic2025new,
title={New Parallel and Streaming Algorithms for Directed Densest Subgraph},
author={Slobodan Mitrovi{\'c} and Theodore Pan and Mahdi Qaempanah and Mohammad Amin Raeisi},
booktitle={The Thirty-ninth Annual Conference on Neural Information Processing Systems},
year={2025}
}

@inproceedings{bahmani2014efficient,
  title={Efficient primal-dual graph algorithms for mapreduce},
  author={Bahmani, Bahman and Goel, Ashish and Munagala, Kamesh},
  booktitle={International Workshop on Algorithms and Models for the Web-Graph},
  pages={59--78},
  year={2014},
  organization={Springer}
}

@article{bahmani2012densest,
  title={Densest Subgraph in Streaming and MapReduce},
  author={Bahmani, Bahman and Kumar, Ravi and Vassilvitskii, Sergei},
  journal={Proceedings of the VLDB Endowment},
  volume={5},
  number={5},
  year={2012}
}

@inproceedings{charikar2000greedy,
  title={Greedy approximation algorithms for finding dense components in a graph},
  author={Charikar, Moses},
  booktitle={International workshop on approximation algorithms for combinatorial optimization},
  pages={84--95},
  year={2000},
  organization={Springer}
}

@inproceedings{ghaffari2019improved,
  title={Improved parallel algorithms for density-based network clustering},
  author={Ghaffari, Mohsen and Lattanzi, Silvio and Mitrovi{\'c}, Slobodan},
  booktitle={International Conference on Machine Learning},
  pages={2201--2210},
  year={2019},
  organization={PMLR}
}

@book{kannan1999analyzing,
  title={Analyzing the structure of large graphs},
  author={Kannan, Ravindran and Vinay, V},
  year={1999},
  publisher={Universit{\"a}t Bonn. Institut f{\"u}r {\"O}konometrie und Operations Research}
}

@inproceedings{su2020distributed,
  title={Distributed Dense Subgraph Detection and Low Outdegree Orientation},
  author={Su, Hsin-Hao and Vu, Hoa T},
  booktitle={34th International Symposium on Distributed Computing},
  year={2020}
}

@inproceedings{sawlani2020near,
  title={Near-optimal fully dynamic densest subgraph},
  author={Sawlani, Saurabh and Wang, Junxing},
  booktitle={Proceedings of the 52nd Annual ACM SIGACT Symposium on Theory of Computing},
  pages={181--193},
  year={2020}
}

@inproceedings{dinitz2025almost,
  title={Almost tight bounds for differentially private densest subgraph},
  author={Dinitz, Michael and Kale, Satyen and Lattanzi, Silvio and Vassilvitskii, Sergei},
  booktitle={Proceedings of the 2025 Annual ACM-SIAM Symposium on Discrete Algorithms (SODA)},
  pages={2908--2950},
  year={2025},
  organization={SIAM}
}

@inproceedings{hooi2016fraudar,
  title={Fraudar: Bounding graph fraud in the face of camouflage},
  author={Hooi, Bryan and Song, Hyun Ah and Beutel, Alex and Shah, Neil and Shin, Kijung and Faloutsos, Christos},
  booktitle={Proceedings of the 22nd ACM SIGKDD international conference on knowledge discovery and data mining},
  pages={895--904},
  year={2016}
}

@inproceedings{prakash2010eigenspokes,
  title={Eigenspokes: Surprising patterns and scalable community chipping in large graphs},
  author={Prakash, B Aditya and Sridharan, Ashwin and Seshadri, Mukund and Machiraju, Sridhar and Faloutsos, Christos},
  booktitle={Pacific-Asia conference on knowledge discovery and data mining},
  pages={435--448},
  year={2010},
  organization={Springer}
}

@article{kleinberg1999authoritative,
  title={Authoritative sources in a hyperlinked environment},
  author={Kleinberg, Jon M},
  journal={Journal of the ACM (JACM)},
  volume={46},
  number={5},
  pages={604--632},
  year={1999},
  publisher={ACM New York, NY, USA}
}

@article{andersen2010local,
  title={A local algorithm for finding dense subgraphs},
  author={Andersen, Reid},
  journal={ACM Transactions on Algorithms (TALG)},
  volume={6},
  number={4},
  pages={1--12},
  year={2010},
  publisher={ACM New York, NY, USA}
}

@article{chen2023densest,
  title={Densest multipartite subgraph search in heterogeneous information networks},
  author={Chen, Lu and Liu, Chengfei and Zhou, Rui and Liao, Kewen and Xu, Jiajie and Li, Jianxin},
  journal={Proceedings of the VLDB Endowment},
  volume={17},
  number={4},
  pages={699--711},
  year={2023},
  publisher={VLDB Endowment}
}

@article{ma2021directed,
  title={On directed densest subgraph discovery},
  author={Ma, Chenhao and Fang, Yixiang and Cheng, Reynold and Lakshmanan, Laks VS and Zhang, Wenjie and Lin, Xuemin},
  journal={ACM Transactions on Database Systems (TODS)},
  volume={46},
  number={4},
  pages={1--45},
  year={2021},
  publisher={ACM New York, NY}
}

@inproceedings{saha2010dense,
  title={Dense subgraphs with restrictions and applications to gene annotation graphs},
  author={Saha, Barna and Hoch, Allison and Khuller, Samir and Raschid, Louiqa and Zhang, Xiao-Ning},
  booktitle={Annual International Conference on Research in Computational Molecular Biology},
  pages={456--472},
  year={2010},
  organization={Springer}
}

@book{junker2008analysis,
  title={Analysis of biological networks},
  author={Junker, Bj{\"o}rn H and Schreiber, Falk},
  volume={2},
  year={2008},
  publisher={Wiley Online Library}
}

@inproceedings{goodrich2011sorting,
  title={Sorting, searching, and simulation in the mapreduce framework},
  author={Goodrich, Michael T and Sitchinava, Nodari and Zhang, Qin},
  booktitle={International Symposium on Algorithms and Computation},
  pages={374--383},
  year={2011},
  organization={Springer}
}

@inproceedings{barenboim2008sublogarithmic,
  title={Sublogarithmic distributed MIS algorithm for sparse graphs using Nash-Williams decomposition},
  author={Barenboim, Leonid and Elkin, Michael},
  booktitle={Proceedings of the twenty-seventh ACM symposium on Principles of distributed computing},
  pages={25--34},
  year={2008}
}

@article{picard1982network,
  title={A network flow solution to some nonlinear 0-1 programming problems, with applications to graph theory},
  author={Picard, Jean-Claude and Queyranne, Maurice},
  journal={Networks},
  volume={12},
  number={2},
  pages={141--159},
  year={1982},
  publisher={Wiley Online Library}
}

@inproceedings{esfandiari2018parallel,
  title={Parallel and streaming algorithms for k-core decomposition},
  author={Esfandiari, Hossein and Lattanzi, Silvio and Mirrokni, Vahab},
  booktitle={International Conference on Machine Learning},
  pages={1397--1406},
  year={2018},
  organization={PMLR}
}

@article{matula1983smallest,
  title={Smallest-last ordering and clustering and graph coloring algorithms},
  author={Matula, David W and Beck, Leland L},
  journal={Journal of the ACM (JACM)},
  volume={30},
  number={3},
  pages={417--427},
  year={1983},
  publisher={ACM New York, NY, USA}
}

@InProceedings{10.1007/3-540-61422-2_127,
author="Asahiro, Yuichi
and Iwama, Kazuo
and Tamaki, Hisao
and Tokuyama, Takeshi",
title="Greedily finding a dense subgraph",
booktitle="Algorithm Theory --- SWAT'96",
year="1996",
publisher="Springer Berlin Heidelberg",
pages="136--148"
}

@inproceedings{khuller2009finding,
  title={On finding dense subgraphs},
  author={Khuller, Samir and Saha, Barna},
  booktitle={International colloquium on automata, languages, and programming},
  pages={597--608},
  year={2009},
  organization={Springer}
}


\end{document}